\newtheorem{thm}{Theorem}[section]
\newtheorem{cor}[thm]{Corollary}
\newtheorem{lem}[thm]{Lemma}
\newtheorem{ex}[thm]{Example}
\newtheorem{defn}[thm]{Definition}
\newcommand{\ie}{\textit{i.e.}}
\newcommand{\eg}{\textit{e.g.}}
\def\Hom{\mathop\mathrm{Hom}\nolimits}
\def\Aut{\mathop\mathrm{Aut}\nolimits}
\def\inj{\mathop\mathrm{inj}\nolimits}
\def\hom{\mathop\mathrm{hom}\nolimits}
\def\injstar{\inj^*}
\def\homstar{\hom^*}
\def\vH{v_H}
\def\GF{\mathrm{GF}}
\def\Hcol#1{$\#_{#1}H$-\textsc{Colouring}}
\def\parityHcol{$\oplus H$-\textsc{Colouring}}
\def\parityTcol{$\oplus T$-\textsc{Colouring}}
\def\parityHoneCol{$\oplus H_1$-\textsc{Colouring}}
\def\modSAT#1{$\#_k$\textsc{Sat}}
\def\parityIndSet{$\oplus$\textsc{IndSet}}
\def\parityP{\oplus\mathrm{P}}
\def\numP{\mathrm{\#P}}
\def\modnumP#1{\#_{#1}\mathrm{P}}
\def\SAT{\textsc{Sat}}
\def\paritySAT{\oplus\textsc{Sat}}
\def\redT{\leq^\mathrm{T}_\mathrm{p}}
\let\phi=\varphi
\def\prob#1#2#3{\goodbreak\begin{list}{}{\labelwidth\z@ \itemindent-\leftmargin
                        \itemsep\z@  \topsep6\p@\@plus6\p@
                        \let\makelabel\descriptionlabel}
                \item[\it Name.]#1
                \item[\it Instance.]#2
                \item[\it Output.]#3
                \end{list}}
\title{The complexity of parity graph homomorphism:\\ an initial investigation}
\author{John Faben%
\thanks{Supported by EPSRC grant EP/E064906/1 
``The Complexity of Counting in Constraint Satisfaction Problems''} }
\author{Mark Jerrum%
\thanks{Mailing address: Mile End Road, London E1 4NS, UK.
Supported by EPSRC grant EP/I011935/1 ``Computational Counting''}}
\affil{School of Mathematical Sciences\\Queen Mary, University of London}
\begin{document}

\maketitle

\begin{abstract}
Given a graph $G$, we investigate the question of determining the parity of the number of homomorphisms from $G$ to some other fixed graph $H$. We conjecture that this problem exhibits a complexity dichotomy, such that all parity graph homomorphism problems are either polynomial-time solvable or $\parityP$--complete, and provide a conjectured characterisation of the easy cases. 

We show that the conjecture is true for the restricted case in which the graph~$H$ is a tree, and provide some tools that may be useful in further investigation into the parity graph homomorphism problem, and the problem of counting homomorphisms for other moduli.
\end{abstract}

\section{Graph homomorphism} \label{secGraphHom}

Graph homomorphism is a natural generalisation of graph colouring, in which the restrictions on adjacencies between colours can be more general than in the usual graph colouring problem. A homomorphism from a graph $G$ to a graph $H$ is an edge-preserving map between the vertices (see Definition \ref{defHomomorphism}). It is sometimes referred to as an $H$-colouring (where the target graph for the homomorphism is $H$). Ordinary graph colouring is the special case of homomorphisms into the complete graph.

\begin{defn} \label{defHomomorphism}
 A homomorphism from a graph $G$ into another graph $H$ is a map $\phi: V(G) \rightarrow V(H)$ satisfying the property that if\/ $(u,v) \in E(G)$ then $(\phi(u),\phi(v)) \in E(H)$.  The set of homomorphisms from $G$ to~$H$ is denoted
$\Hom(G,H)$.
\end{defn}

\begin{ex} 
A homomorphism from a graph $G$ to the complete graph $K_n$ is a (proper, vertex)
$n$-colouring of $G$.
\end{ex}

\begin{ex}
Let $H_1$ be the graph with vertex set $\{a,b\}$, an edge joining $a$ and $b$, and a loop at~$b$. 
A homomorphism from a graph $G$ to $H_1$ can be considered as an independent set of $G$. The vertices mapped to vertex~$a$ form an independent set (as none of them can be pairwise adjacent) and, conversely, given an independent set, it is possible to map the vertices of the independent set to~$a$ and the vertices of its complement to~$b$. So there is a natural one-to-one correspondence between homomorphisms to $H_1$ and independent sets.
\end{ex}

For the purposes of this paper, both $G$ and $H$ are allowed to have loops on their vertices, but not multiple edges. To reduce the potential for confusion, we will usually refer to the vertices of $H$ as ``colours'', reserving the word ``vertex'' for vertices of $G$. 

Fix a target graph~$H$.  There are a number of computational problems
of the form:  given an instance (graph) $G$ return some information about $\Hom(G,H)$.
The most basic one is the decision problem, which asks if $\Hom(G,H)$ non-empty.
Each $H$ specifies a particular decision problem;  for example, if $H$ is the triangle, 
the problem is to decide if $G$ is 3-colourable.  The goal is then to classify the complexity 
of the computational problem in terms of the graph~$H$.  The ideal is to identify a dichotomy,
i.e., a partition of graphs $H$ into those that specify tractable problems and those that
specify intractable ones.

The complexity of the decision version of the graph homomorphism problem 
was completely classified by Hell and Ne\v{s}et\v{r}il in \cite{HellNesetril90}. For a given graph $H$, deciding whether an arbitrary graph has a homomorphism to $H$ can be done in polynomial time if $H$ has a loop or is bipartite. Hell and Ne\v{s}et\v{r}il showed that this decision problem is NP-complete in all other cases. 

It is also natural to consider the counting problem, which asks for the cardinality of $\Hom(G,H)$, which we denote by $\hom(G,H)$.
The problem of exactly counting the number of homomorphisms to a fixed graph~$H$ was considered by Dyer and Greenhill~\cite{DyerGreenhill2000}, who gave a complete characterisation, again with a dichotomy theorem: the counting problem 
is polynomial-time solvable if $H$ is either a complete graph with loops everywhere or a complete bipartite graph without loops, and it is $\numP$-complete otherwise. 

The result of Dyer and Greenhill has been extended in many different directions
by various authors.  One possibility is to specify weights $w:E(H)\to\mathbb{C}$
for the edges of~$H$;  this edge-weighting naturally induces a weighing of homomorphisms~$\phi$
from $G$ to~$H$, by taking a product of weights $w(\phi(u),\phi(v))$ over edges $\{u,v\}$ of~$G$.
In the weighted setting, one can express partition functions of models in statistical physics.
Note that the unweighted form of the problem can be recovered by restricting weights to be $\{0,1\}$.
Bulatov and Grohe~\cite{BulatovGrohe05} exhibited a dichotomy for non-negative real weights,
which was extended to arbitrary real weights by 
Goldberg, Grohe, Jerrum and Thurley~\cite{GGGT10}, and then on to complex weights 
by Cai, Chen and Lu~\cite{CaiChenLu13}.  The 
massive further generalisation to Constraint Satisfaction Problems (CSPs) was undertaken 
by several authors (\eg, Bulatov~\cite{Bulatov08} and Dyer and Richerby~\cite{DyerR11}), 
culminating in the complex weighted case by Cai and Chen~\cite{CaiChen12}.
See Chen's survey for more details~\cite{ChenSurvey11}.

In this paper, we shall mostly be concerned with the problem of determining the cardinality of $\Hom(G,H)$
modulo~$k$, for a positive integer~$k$, with a special emphasis on $k=2$, i.e., determining wither 
the number of $H$-colourings is odd or even.  For $k\geq 2$ and $n$ an integer, denote by $[n]_k$
the residue class of~$n$ \textit{modulo}~$k$.  We can of course identify these classes with the 
integers $\{0,1,\ldots,k-1\}$.  Formally, our computational problem is the following. 
\prob%
{\Hcol{k}.}%
{An undirected graph $G$.}%
{$[\hom(G,H)]_k$, i.e., the number of $H$-colourings of $G$ \textit{modulo} $k$.}
Since the case $k=2$ is of special significance, we introduce \parityHcol{} 
as a synonym for \Hcol2.

We give a dichotomy theorem for \parityHcol{} in the case where $H$ is a tree: either 
\parityHcol{} is $\parityP$-complete or it can be solved in polynomial time.
(See Theorem~\ref{thmTreeDichotomy}.) 
Informally, $\parityP$ is the class of problems that can be expressed in terms
of deciding the parity of the number of accepting computations of a non-deterministic Turing
machine;  see Section~\ref{sec:complex} for a precise definition. 
The proof of the dichotomy is based on a reduction system which transforms $H$ to a ``reduced form''
of equivalent complexity.
Since it is easy to decide the complexity of \parityHcol{} for reduced forms, we obtain not
only the dichotomy result, but also an effective procedure for deciding the dichotomy.
We conjecture that the same reduction system decribes a complexity dichotomy for general graphs. 
Although this conjecture remains open in general, G\"obel, Goldberg and Richerby~\cite{GobelEtAl13} have 
extended our result by showing that the conjecture holds for cactus graphs.

Finally we draw attention to some existing work in the general area of modular counting.
The complexity of modular counting problems has been studied for at least three decades,
early contributions being made by
Valiant~\cite{Valiant79} and Papadimitriou and Zachos\cite{Papadimtriou82}.  
One of the more striking results, is that of Valiant~\cite{Valiant06},
who provides an example of a counting problem that is unexpectedly easy \textit{modulo}~7,
though hard \textit{modulo}~2.
It is worth noting that modular CSPs have been studied, \eg, 
by Faben~\cite{Faben2008} and Guo, Huang, Lu and Xia~\cite{GuoHLX11}. 
This work is both more
general, in the sense of being set within the wider context of CSPs, but also more restrictive,
in that it relates to the two-element (Boolean) domain only.

\section{Modular counting complexity}\label{sec:complex}

\subsection{The classes $\modnumP{k}$} \label{secClasses}
In this section, we formally define the counting classes that we will use in this paper.

A classical counting problem can be considered as a function taking a problem
instance to the number of solutions associated with that instance.  
When counting is done \textit{modulo\/} some number $k\geq2$,
it is possible to view the problem from two somewhat different 
standpoints.  On the one hand there is the decision or language view, where 
the task is to determine whether the number of solutions is different from 0, 
\textit{modulo} $k$.  On the other is the function view, where the 
task is to compute the residue, modulo $k$, of the number of solutions.  
Both views have been taken in earlier work, and the distinctions between 
them have been examined by Faben~\cite{Faben2012}.  

In the current context, the function view seems more natural.  
We work within a class $\modnumP{k}$ of computational 
problems which is the modular analogue of the
classical class $\numP$ of counting problems.  Informally, $\modnumP{k}$ contains
functions that can be expressed as the residue, \textit{modulo} $k$, of the
number of accepting computational of a nondeterministic polynomial-time Turing
machine.

Let $\Sigma$ be a finite alphabet over which we agree to encode problem instances.
\begin{defn}
Let $M$ be a non-deterministic Turing Machine. 
Denote by $\#\mathrm{acc}_M(x)$ the number of accepting paths of the machine~$M$ 
on the input $x\in\Sigma^*$.
The class $\numP$ consists of all functions $f:\Sigma^*\to\mathbb{N}$ that can be 
expressed as $f(x)=\#\mathrm{acc}_{M}(x)$ for some non-deterministic polynomial-time 
Turing Machine~$M$.
The class $\modnumP{k}$ consists of all functions $f:\Sigma^*\to\{0,1,\ldots,k-1\}$ 
that can be expressed as $f(x)=[\#\mathrm{acc}_M(x)]_k$.
\end{defn}
In this paper, we are concerned particularly with the case $k=2$, and we follow 
other authors in using $\parityP$ as a synonym for $\modnumP2$~\cite{Papadimtriou82}.

Given a counting problem in $\numP$, say $\#A$, we write \#$_kA$ for the \#$_k$P problem of determining the number of solutions to $A$ \textit{modulo} $k$. So while \#$A:\Sigma^* \rightarrow \mathbb{N}$ is a function defined from strings to the natural numbers, \#$_kA:\Sigma^* \rightarrow \{0, \ldots, k-1\}$ is the function from strings to the integers \textit{modulo} $k$ defined by \#$_kA(x) \equiv \textrm{\#}A(x) \pmod k$.  
As an example, \modSAT{k} is the problem of determining
the number of satsifying assignments to a CNF Boolean formula, \textit{modulo}~$k$.
Naturally, $\paritySAT$ is the special case $k=2$ of this problem.

\subsection{Completeness}

Again, in an analogy with $\numP$-completeness, we define the notion of $\modnumP{k}$-completeness with respect to polynomial-time Turing reducibility. Essentially, a problem $A$ is $\modnumP{k}$-hard if every problem in $\modnumP{k}$ can be solved in polynomial time given an oracle for~$A$.

\begin{defn} \label{defnTuringReduction}
We say that a problem $B$ is {\em polynomial-time Turing reducible} to a problem $A$ if problem~$B$ can be solved in polynomial time using an oracle for problem~$A$. We write $B \redT A$.
\end{defn}
 
\begin{defn} \label{defnHardness}
A counting problem $A$ is {\em$\modnumP{k}$-hard} if, for every problem $B$ in $\modnumP{k}$, $B \redT A$, \ie, if every problem in $\modnumP{k}$ is polynomial-time Turing reducible to $A$. 
It is {\em$\modnumP{k}$-complete} if, in addition, $A$ is in $\modnumP{k}$.
\end{defn}

As one might expect, the modular counting versions of \SAT, namely \modSAT{k} for $k\geq2$,
are examples of $\modnumP{k}$-complete problems for all $k$. This can be easily seen as the usual reduction in Cook's Theorem, showing that SAT is NP-complete is parsimonious (\ie, preserves the number of solutions), and so certainly preserves the number of solutions \textit{modulo} $k$ for all $k$.

As mentioned above, the complexity of exactly counting the number of homomorphisms to a given graph $H$ was characterised by Dyer and Greenhill. They proved the following theorem.

\begin{thm} [Dyer and Greenhill \cite{DyerGreenhill2000}] \label{thmDyerGreenhill} 
If a graph $H$ is a complete bipartite graph with no loops or a complete graph with loops everywhere, then exactly counting $H$-colourings can be done in polynomial time. Otherwise, the problem is $\numP$-complete.
\end{thm}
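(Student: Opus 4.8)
The plan is to separate the two directions; membership in $\numP$ and the tractable cases are routine, while the hardness is the real content. For membership, a homomorphism $\phi:V(G)\to V(H)$ is a certificate of polynomial size whose edge-preservation is checkable in polynomial time, so $\hom(G,H)$ is the number of accepting paths of a nondeterministic polynomial-time machine and the problem lies in $\numP$; for the intractable $H$ it therefore suffices to prove $\numP$-hardness. For the tractable cases, if $H$ is the complete graph on $n$ colours with a loop at every colour then \emph{every} map $V(G)\to V(H)$ is a homomorphism (each ordered pair of colours, equal or not, is an edge), so $\hom(G,H)=n^{|V(G)|}$. If $H=K_{a,b}$ is complete bipartite with parts of sizes $a$ and $b$ and no loops, then $\hom(G,H)$ factorises over the connected components of $G$: a component carrying a loop or containing an odd cycle (which would force an odd closed walk into the loopless bipartite target) contributes $0$, an isolated vertex contributes $a+b$, and a connected bipartite component with bipartition $(X,Y)$ and at least one edge contributes $a^{|X|}b^{|Y|}+a^{|Y|}b^{|X|}$. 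Each factor is computable directly, so the product is obtained in polynomial time.

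For hardness I would follow the interpolation strategy of Dyer and Greenhill. A preliminary reduction lets one assume $H$ is connected, so the task becomes: show that every connected $H$ that is neither reflexive-complete nor loopless-complete-bipartite yields a $\numP$-hard problem. Such an $H$ must contain some structural irregularity not present in the two special families — a missing loop or a missing edge — witnessing that its adjacency matrix has richer eigenvalue structure than the rank-one matrix of a reflexive complete graph or the rank-two matrix of a loopless complete bipartite graph. The engine for exploiting this is polynomial interpolation: to each vertex of the input one attaches a gadget depending on an integer parameter $t$ (a pendant tree, or $t$ repetitions of a fixed gadget), so that $\hom$ of the enlarged graph becomes a polynomial in $t$ whose coefficients involve powers of the eigenvalues of the adjacency matrix of $H$. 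Evaluating at enough values of $t$ and inverting a Vandermonde-type system recovers refined counts, and in particular allows one to \emph{pin} chosen input vertices to chosen colours or colour classes. With pinning in hand, one encodes a canonical $\numP$-complete problem — counting independent sets, \ie\ $\hom(\,\cdot\,,H_1)$, or counting proper colourings $\hom(\,\cdot\,,K_q)$ for $q\ge3$ — into $\hom(\,\cdot\,,H)$ by simulating the required target structure inside $H$ via the irregularity.

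The main obstacle is twofold. First, one must prove that the interpolation systems are non-singular; this comes down to separation of the relevant eigenvalues (or distinctness of certain degree quantities), conditions that degenerate \emph{precisely} for the reflexive complete graph and the loopless complete bipartite graph, which is exactly why the dichotomy falls where it does. Second, one must organise an exhaustive case analysis over the possible shapes of $H$ — loopless versus loops present, bipartite versus non-bipartite, together with the degenerate small cases — matching each non-tractable $H$ with a gadget and a source problem for which the reduction goes through. The bookkeeping of this case analysis, rather than any single clever idea, is where the bulk of the work lies.
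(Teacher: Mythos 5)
First, a point of comparison: the paper does not prove this statement at all. It is imported verbatim as a known theorem of Dyer and Greenhill, so there is no internal proof to measure your attempt against; your proposal has to stand on its own. Its easy half does: membership in $\numP$ by guess-and-check is right, $\hom(G,H)=n^{|V(G)|}$ for the reflexive complete graph on $n$ colours is right, and your component-wise evaluation for a loopless complete bipartite target (zero for a component carrying a loop or an odd cycle, $a+b$ for an isolated vertex, $a^{|X|}b^{|Y|}+a^{|Y|}b^{|X|}$ for a connected bipartite component with at least one edge) is correct and complete.

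The hardness half, however, is a research plan rather than a proof, and it also trips over the statement itself. You describe Dyer and Greenhill's machinery --- parameterised gadgets, Vandermonde systems, eigenvalue separation, pinning, a concluding case analysis --- but you construct no gadget, verify no non-singularity condition, and exhibit no reduction from a $\numP$-complete problem; by your own admission the ``bulk of the work'' is precisely what is deferred. Moreover, your opening move, a ``preliminary reduction'' to connected $H$, cannot be carried out for the statement as quoted, because that statement is literally false for disconnected $H$: the disjoint union of two looped vertices is neither a complete graph with loops everywhere nor a loopless complete bipartite graph, yet $\hom(G,H)=2^{c}$, where $c$ is the number of connected components of $G$, which is trivially computable. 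Dyer and Greenhill's actual theorem is phrased componentwise --- counting is tractable exactly when every connected component of $H$ is of one of the two exceptional forms --- and any correct proof must target that formulation; the reduction from disconnected to connected $H$ is itself nontrivial (for the parity analogue, this paper needs Theorem~\ref{thmConnectedComponentHardness} together with pinning to achieve it). The version quoted here should be read with $H$ connected, and your plan does not flag this, so even as an outline it is aimed at a claim that no case analysis could establish.
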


Clearly, if the number of homomorphisms to a graph $H$ can be counted exactly in polynomial time, then the parity can be determined in polynomial time. We will show that there are some cases in which symmetries of~$H$ can make the related modular counting problem easy, even when the exact counting problem is $\numP$-hard.

\section{Reduction by involutions} \label{secReductionByInvolutions}

We will also need to discuss the automorphism groups of graphs. There will be particular reference to automorphisms of order 2, or involutions.

\begin{defn}
An {\em automorphism} of a graph $G$ is an injective homomorphism from $G$ to itself. 
In other words, an automorphism of a graph $G$ is a permutation~$\sigma$ of the vertices of $G$ such that $\{\sigma(u),\sigma(v)\} \in E(G) \iff \{u,v\} \in E(G)$. 
If $\sigma$ has order 2, i.e., $\sigma$ is not the identity but $\sigma\circ\sigma$ is, then we say that $\sigma$ is an {\em involution} of~$G$.
\end{defn}

As hinted at earlier, our approach is based on a reduction system for graphs $H$ that preserves the complexity of 
the problem \parityHcol.  The reductions are defined in terms of the automorphisms of~$H$.

\begin{defn} \label{defSigmaH}
Let $H$ be a graph, and $\sigma$ an automorphism of $H$. We denote by $H^\sigma$ the subgraph of $H$ induced by the fixed points of $\sigma$.
\end{defn}

\begin{lem} \label{lemInvolutions}
If $H$ is a graph, and $\sigma$ an involution of $H$, the number of $H$-colourings of any graph~$G$ is congruent \textit{modulo} 2 to the number of $H^\sigma$-colourings of $G$.
\end{lem}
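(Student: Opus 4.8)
The plan is to exploit the natural action of the order-two group generated by $\sigma$ on the set $\Hom(G,H)$ and to read off the congruence from the orbit structure. First I would check that post-composition with $\sigma$ maps homomorphisms to homomorphisms: if $\phi\in\Hom(G,H)$ and $\{u,v\}\in E(G)$, then $\{\phi(u),\phi(v)\}\in E(H)$, and since $\sigma$ is an automorphism we get $\{\sigma(\phi(u)),\sigma(\phi(v))\}\in E(H)$, so $\sigma\circ\phi\in\Hom(G,H)$. Thus $\phi\mapsto\sigma\circ\phi$ is a well-defined self-map of $\Hom(G,H)$, and because $\sigma\circ\sigma$ is the identity on $V(H)$, this map is its own inverse; it therefore generates an action of a group of order at most two on $\Hom(G,H)$.

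Next I would analyse the orbits of this action. Since the acting group has order at most two, every orbit has size one or two. The size-two orbits pair up the homomorphisms they contain, so collectively they contribute an even number to $\hom(G,H)$. An orbit is a singleton precisely when $\sigma\circ\phi=\phi$, that is, when $\sigma(\phi(v))=\phi(v)$ for every $v\in V(G)$; equivalently, when $\phi(v)$ is a fixed point of $\sigma$ for every vertex~$v$, so that $\phi$ maps $G$ entirely into the fixed-point set of~$\sigma$.

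The final step is to identify these fixed homomorphisms with $H^\sigma$-colourings. By Definition~\ref{defSigmaH}, $H^\sigma$ is the subgraph of $H$ induced on the fixed points of $\sigma$, so a homomorphism $\phi\in\Hom(G,H)$ whose image lies in this fixed-point set is exactly a homomorphism from $G$ to $H^\sigma$ (each edge $\{\phi(u),\phi(v)\}$ of $H$ with both endpoints fixed, a loop included, is by definition an edge of the induced subgraph $H^\sigma$), and conversely every element of $\Hom(G,H^\sigma)$ arises this way. Hence the number of singleton orbits equals $\hom(G,H^\sigma)$, and
$$\hom(G,H)=\hom(G,H^\sigma)+2\cdot(\text{number of size-two orbits})\equiv\hom(G,H^\sigma)\pmod 2,$$
as required. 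The only place demanding any care is this last identification—verifying that the fixed points of the induced action are in exact bijection with the $H^\sigma$-colourings, in particular that edges and loops behave correctly on passing to the induced subgraph—but this is routine; the conceptual heart of the argument is simply that an involution pairs up all of its non-fixed points two at a time, leaving behind exactly the $H^\sigma$-colourings.
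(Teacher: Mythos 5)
Your proof is correct and is essentially the paper's own argument: both use the map $\phi\mapsto\sigma\circ\phi$ on $\Hom(G,H)$, identify its fixed points with the $H^\sigma$-colourings, and observe that the remaining homomorphisms pair up so that they contribute an even amount. Your orbit-counting phrasing is just a mild repackaging of the paper's statement that $\sigma$ acts as a fixed-point-free involution on the colourings using at least one colour outside $V(H^\sigma)$.
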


\begin{proof}
We will in fact show that the number of $H$-colourings of $G$ which are not $H^\sigma$-colourings is even,
which is equivalent to saying that see the number of $H$-colourings which use at least 
one colour in $V(H) \backslash V(H^\sigma)$ is even.  

To see this,
we partition the set of such colourings into subsets of size two. 
The basic idea here is that to each colouring which uses at least colour in $V(H)\backslash V(H^\sigma)$
we can associate the colouring gained by first applying $\sigma$ to~$H$ and then colouring $G$. 
Formally, given any colouring $\phi: V(G) \rightarrow V(H)$, consider the alternative colouring $\sigma \circ \phi$. This is still an $H$-colouring of $G$, as both $\sigma$ and $\phi$ are edge-preserving. It is different from $\phi$ as there is some vertex $v \in G$ such that $\phi(v) \in V(H) \backslash V(H^\sigma)$, and so $\sigma(\phi(v)) \neq \phi(v)$. On the other hand $\sigma \circ \sigma \circ \phi$ is just $\phi$, as $\sigma$ is an involution. So $\sigma$ acts as an involution on the set of $H$-colourings of $G$ which use at least one colour from $V(H) \backslash V(H^\sigma)$. Since this involution has no fixed points, 
the size of this set must be even. 
\end{proof}

Note that the above argument does not rely on any special properties of the modulus~2 beyond 
the fact that it is prime.  

\begin{thm}\label{thm:inv}
For any prime $p$, if $H$ is a graph, and $\sigma$ an automorphism of $H$ of order~$p$, the number of $H$-colourings of any graph~$G$ is congruent \textit{modulo}~$p$ to the number of $H^\sigma$-colourings of $G$.
\end{thm}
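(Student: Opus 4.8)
The plan is to upgrade the pairing argument of Lemma~\ref{lemInvolutions} to an orbit-counting argument that exploits the primality of~$p$. As before, $\sigma$ acts on the set $\Hom(G,H)$ of all $H$-colourings of~$G$ by $\phi\mapsto\sigma\circ\phi$. This is again well defined, since $\sigma\circ\phi$ is a homomorphism whenever $\phi$ is (because $\sigma$ is edge-preserving), so we obtain an action of the cyclic group $\langle\sigma\rangle$, which has order exactly~$p$ as $\sigma$ has order~$p$.

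First I would identify the fixed points of this action. A colouring $\phi$ satisfies $\sigma\circ\phi=\phi$ precisely when $\sigma(\phi(v))=\phi(v)$ for every $v\in V(G)$, that is, when $\phi$ takes all of its values among the fixed points of~$\sigma$. Since $H^\sigma$ is by definition the subgraph of~$H$ induced by these fixed points (Definition~\ref{defSigmaH}), such a $\phi$ is exactly an $H^\sigma$-colouring of~$G$, and conversely every $H^\sigma$-colouring is fixed. This gives a bijection between the fixed points of the action and $\Hom(G,H^\sigma)$.

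The crucial step, and the only place where the primality of~$p$ enters, is the count of the remaining colourings. By the orbit--stabiliser theorem every orbit of $\langle\sigma\rangle$ on $\Hom(G,H)$ has size dividing~$p$; since $p$ is prime, each orbit has size either~$1$ or~$p$. The orbits of size~$1$ are precisely the fixed points described above, while every colouring that uses at least one colour outside $V(H^\sigma)$ lies in an orbit of size exactly~$p$. Hence the number of such colourings is a multiple of~$p$, and therefore
\[
\hom(G,H)=\hom(G,H^\sigma)+(\text{a multiple of }p),
\]
which yields the claimed congruence \textit{modulo}~$p$. For $p=2$ the orbits of size~$2$ are exactly the pairs $\{\phi,\sigma\circ\phi\}$ of the lemma, so this argument specialises to the earlier one.

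I do not anticipate a serious obstacle here. The single point requiring care is the assertion that every non-trivial orbit has size \emph{precisely}~$p$ rather than merely dividing it, which is exactly where primality is indispensable: for an automorphism of composite order one can have intermediate orbit sizes, the non-fixed colourings need no longer form a multiple of the modulus, and the congruence may fail.
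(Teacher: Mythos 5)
Your proposal is correct and is essentially the paper's own argument: the paper proves the $p=2$ case (Lemma~\ref{lemInvolutions}) by pairing each colouring that uses a non-fixed colour with $\sigma\circ\phi$, and then explicitly remarks that this argument uses nothing about the modulus $2$ beyond its primality, the intended generalisation being precisely your orbit-counting formulation. Your orbit--stabiliser step (fixed points of the $\langle\sigma\rangle$-action are exactly the $H^\sigma$-colourings, and every other orbit has size exactly $p$ by primality) fills in that remark correctly, so there is no gap and no genuinely different route.
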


It is not just the proof that fails for a composite modulus~$k$.  The complete graph $K_5$ on five vertices
has an automorphism of order~6 that moves all the vertices, 
but it is not true that for every graph~$G$ that the number 
of 5-colourings of~$G$ is divisible by~6. 

We define the following reduction system on the set of unlabelled graphs.

\begin{defn} \label{defReduction}
The binary relation $\rightarrow_k$ on graphs is defined as follows.  For graphs $H$ and $K$, the relation $H \rightarrow_k K$ 
holds iff there exists an automorphism~$\sigma$ of~$H$, of order~$k$, such that $H^\sigma = K$. 
If there exists a sequence of graphs $H_1, H_2, \ldots, H_\ell$ such that $H \rightarrow_k H_1 \rightarrow_k \ H_2 \rightarrow_k \ldots \rightarrow_k H_\ell  = K$,
we write $H \rightarrow_k^* K$ and say that {\em $H$ reduces to $K$ by automorphisms of order~$k$}. (If $k=2$, we say that 
{\em $H$ reduces to~$K$ by involutions}.)  If $K$ has no automorphisms of order~$k$ we say that $K$ is a {\em reduced form} associated with the graph~$H$.
\end{defn}

\begin{ex}
In Figure \ref{Something} we give an example of a graph $H$, along with two ways of reducing $H$ by involutions. On the right-hand side we reduce $H$ by using the involution~$\sigma$ which swaps each of the pairs of vertices $a$ and $e$, $b$ and $f$, $c$ and $d$, leaving behind only the involution-free graph on the vertices $g$ and~$h$. On the left-hand side, we begin with the involution $\tau$ which swaps $e$ and $f$, and have to reduce the resulting graph by involutions twice more before we get to the involution-free graph $((H^\tau)^\upsilon)^\eta$ which is isomorphic to the graph $H^\sigma$.  This is not a coincidence.
We will see in Theorem \ref{thmConfluence} that reduced forms are unique.
\end{ex}

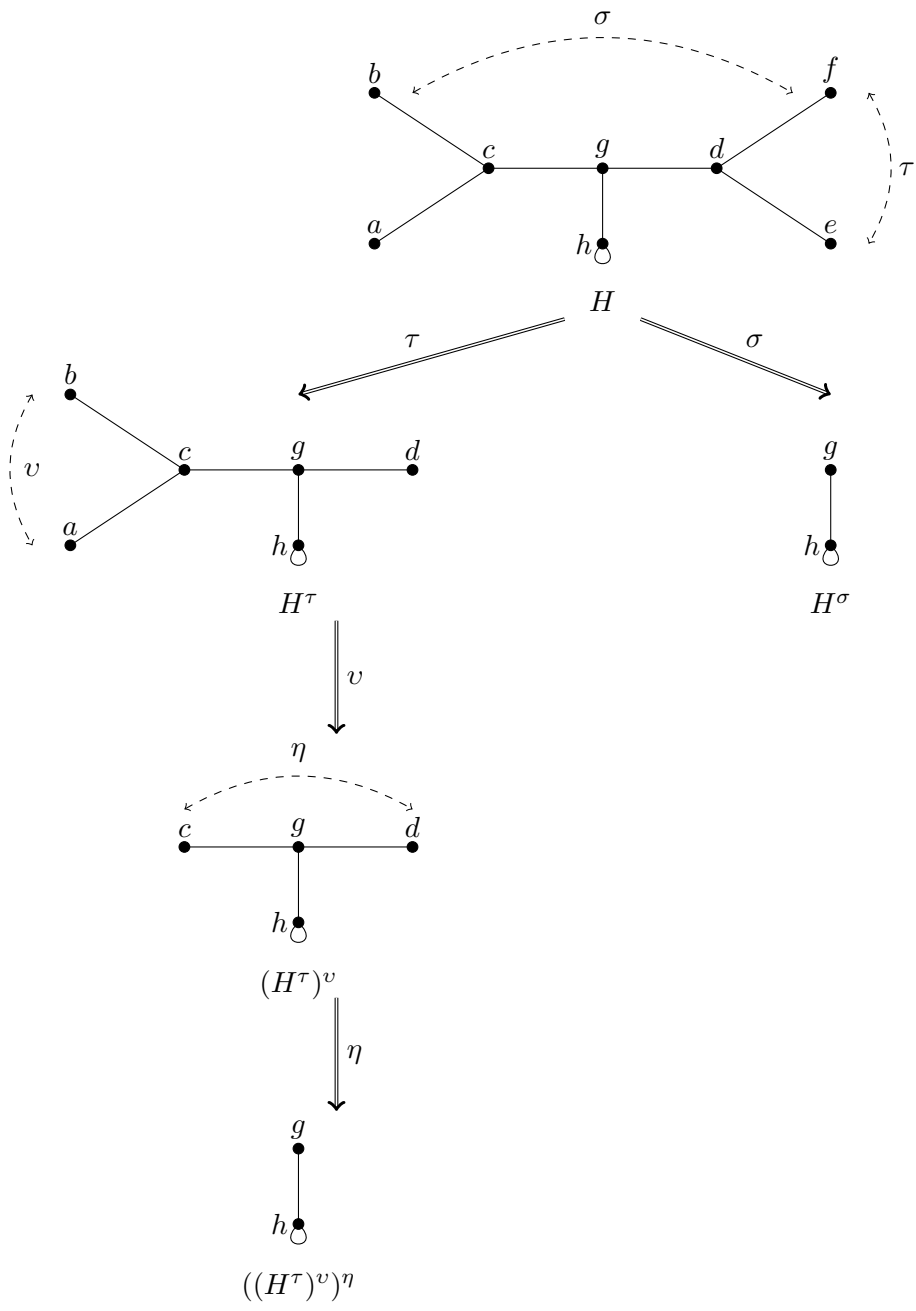
\begin{figure} 
\begin{center}
\begin{tikzpicture}

\path (1,1) coordinate (a);
\path (1,1) node[above] (alab) {$a$};

\path (1,3) coordinate (b);
\path (1,3) node[above] (blab) {$b$};

\path (2.5,2) coordinate (c);
\path (2.5,2) node[above] (clab) {$c$};

\path (4,2) coordinate (g);
\path (4,2) node[above] (glab) {$g$};

\path (5.5,2) coordinate (d);
\path (5.5,2) node[above] (dlab) {$d$};

\path (7,1) coordinate (e);
\path (7,1) node[above] (elab) {$e$};

\path (7,3) coordinate (f);
\path (7,3) node[above] (flab) {$f$};

\path (4,1) coordinate (h);
\path (4,1) node[left] (hlab) {$h$};
\draw[out=-45, in=-135, loop] (h) to ();

\path (4,4) node (sigma) {$\sigma$};
\path (8,2) node (tau) {$\tau$};

\draw [<->,dashed] (1.5,3) to[bend left] (6.5,3);
\draw [<->,dashed] (7.5,3) to[bend left] (7.5,1);

\draw [fill] (a) circle (2pt);
\draw [fill] (b) circle (2pt);
\draw [fill] (c) circle (2pt);
\draw [fill] (d) circle (2pt);
\draw [fill] (e) circle (2pt);
\draw [fill] (f) circle (2pt);
\draw [fill] (g) circle (2pt);
\draw [fill] (h) circle (2pt);

\draw (g)--(h);
\draw (a)--(c)--(b);
\draw (c)--(g)--(d)--(e);
\draw (g)--(g);
\draw (d)--(f);

\path (4,0.5) node[below] (G) {$H$};


\draw [->,double] (3.5,0) -- (0,-1);
\path (1.5,-0.5) node[above] (tauarrow) {$\tau$};

\draw [->,double] (0.5,-4) -- (0.5,-5.5);
\path (0.5,-4.75) node[right] (upsilonarrow) {$\upsilon$};

\draw [->,double] (0.5,-9) -- (0.5,-10.5);
\path (0.5,-9.75) node[right] (etanarrow) {$\eta$};

\draw [->,double] (4.5,0) -- (7,-1);
\path (6,-0.5) node[above] (sigmaarrow) {$\sigma$};

\path (-3,-3) coordinate (adash);
\path (-3,-3) node[above] (alabdash) {$a$};

\path (-3,-1) coordinate (bdash);
\path (-3,-1) node[above] (blabdash) {$b$};

\path (-1.5,-2) coordinate (cdash);
\path (-1.5,-2) node[above] (clabdash) {$c$};

\path (0,-2) coordinate (gdash);
\path (0,-2) node[above] (glabdash) {$g$};

\path (1.5,-2) coordinate (ddash);
\path (1.5,-2) node[above] (dlabdash) {$d$};

\path (0,-3) coordinate (hdash);
\path (0,-3) node[left] (hlabdash) {$h$};
\draw[out=-45, in=-135, loop] (hdash) to ();

\draw [<->,dashed] (-3.5,-1) to[bend right] (-3.5,-3);
\path (-3.5,-2) node (upsilon) {$\upsilon$};

\draw [fill] (adash) circle (2pt);
\draw [fill] (bdash) circle (2pt);
\draw [fill] (cdash) circle (2pt);
\draw [fill] (ddash) circle (2pt);
\draw [fill] (gdash) circle (2pt);
\draw [fill] (hdash) circle (2pt);

\draw (gdash)--(hdash);
\draw (adash)--(cdash)--(bdash);
\draw (cdash)--(gdash)--(ddash);

\path (0,-3.5) node[below] (Gtau) {$H^\tau$};

\path (7,-2) coordinate (gsigma);
\path (7,-2) node[above] (glabsigma) {$g$};

\path (7,-3) coordinate (hsigma);
\path (7,-3) node[left] (hlabsigma) {$h$};
\draw[out=-45, in=-135, loop] (hsigma) to ();

\draw [fill] (gsigma) circle (2pt);
\draw [fill] (hsigma) circle (2pt);

\draw (gsigma)--(hsigma);

\path (7,-3.5) node[below] (Gsigma) {$H^\sigma$};


\path (-1.5,-7) coordinate (cdash1);
\path (-1.5,-7) node[above] (clabdash1) {$c$};

\path (0,-7) coordinate (gdash1);
\path (0,-7) node[above] (glabdash1) {$g$};

\path (1.5,-7) coordinate (ddash1);
\path (1.5,-7) node[above] (dlabdash1) {$d$};

\path (0,-8) coordinate (hdash1);
\path (0,-8) node[left] (hlabdash1) {$h$};
\draw[out=-45, in=-135, loop] (hdash1) to ();

\draw [<->,dashed] (-1.5,-6.5) to[bend left] (1.5,-6.5);
\path (0,-6) node[above] (vau) {$\eta$};

\draw [fill] (cdash1) circle (2pt);
\draw [fill] (ddash1) circle (2pt);
\draw [fill] (gdash1) circle (2pt);
\draw [fill] (hdash1) circle (2pt);

\draw (gdash1)--(hdash1);
\draw (cdash1)--(gdash1)--(ddash1);

\path (0,-8.5) node[below] (Gtau) {$(H^\tau)^\upsilon$};

\path (0,-11) coordinate (gdash2);
\path (0,-11) node[above] (glabdash2) {$g$};

\path (0,-12) coordinate (hdash2);
\path (0,-12) node[left] (hlabdash2) {$h$};
\draw[out=-45, in=-135, loop] (hdash2) to ();

\draw [fill] (gdash2) circle (2pt);
\draw [fill] (hdash2) circle (2pt);

\draw (gdash2)--(hdash2);

\path (0,-12.5) node[below] (Gtauagain) {$((H^\tau)^\upsilon)^\eta$};
\end{tikzpicture}
\end{center}

\caption{An example of a graph $H$ with the sequence of reductions we get from $H$ if we start with each of the involutions $\sigma$ and $\tau$.}
\label{Something}
\end{figure}

To make further progress, we need to assume $k=p$ is prime.  Eventually, we will further restrict 
attention to the case $p=2$.  However, we state and prove some intermediate results for
a general prime~$p$, as they may be of use in further explorations of modular counting
problems.  

Theorem~\ref{thm:inv} says that in classifying the complexity of \Hcol{p} problems, it is enough to
restrict attention to graphs~$H$ that are reduced forms, \ie, that do not have any automorphisms of 
order~$p$.  This is enough for the proof of the main dichotomy result, 
but it is an interesting fact that reduced forms are unique.  In any case, the concepts used 
in the proof of uniqueness of the reduced form will be needed later.  

\begin{thm} \label{thmConfluence}
Given a graph $G$, and a prime $p$ there is (up to isomorphism) exactly one graph $G^*$ such that $G^*$ has no automorphisms of order~$p$ and $G \rightarrow_p^* G^*$.
\end{thm}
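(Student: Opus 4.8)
The plan is to recognise $\rightarrow_p$ as an abstract rewriting system on isomorphism classes of graphs and to prove it is \emph{terminating} and \emph{locally confluent}; Newman's Lemma then upgrades local confluence to global confluence, and a confluent terminating system has a unique normal form. Since $\rightarrow_p$ is defined on unlabelled graphs, ``unique normal form'' is exactly the desired ``unique up to isomorphism''. Termination is immediate: if $\sigma$ has order $p$ then, $p$ being prime, every orbit of $\langle\sigma\rangle$ on $V(G)$ has size $1$ or $p$, and at least one orbit has size $p$ (as $\sigma\neq\mathrm{id}$); hence $|V(G^\sigma)|<|V(G)|$, so every reduction chain strictly decreases the number of vertices and must stop. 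In particular reduced forms exist, and it remains only to establish local confluence: whenever $G\rightarrow_p G^\sigma$ and $G\rightarrow_p G^\tau$ for order-$p$ automorphisms $\sigma,\tau$, the graphs $G^\sigma$ and $G^\tau$ have a common reduct.

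The engine of the argument is the following lemma about $p$-groups of automorphisms, where for a subgroup $S\leq\Aut(G)$ I write $G^S$ for the subgraph induced by the vertices fixed by every element of $S$.

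\textbf{Key Lemma.} If $S\leq\Aut(G)$ is a finite $p$-group and $\sigma\in S$ has order $p$, then $G^\sigma \rightarrow_p^* G^S$.

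I would prove this by induction on $|S|$. If $|S|=p$ then $S=\langle\sigma\rangle$ and $G^\sigma=G^S$. Otherwise pick a central element $\rho\in Z(S)$ of order $p$ (a nontrivial $p$-group has nontrivial centre). Because $\rho$ is central it commutes with $\sigma$, so it preserves the vertex set of $G^\sigma$ and restricts to an automorphism of $G^\sigma$; that restriction has order $p$ or is trivial, and in either case $G^\sigma\rightarrow_p^* G^{\langle\sigma,\rho\rangle}$ (one reduction step in the first case, none in the second). Again using centrality, $S$ preserves $\mathrm{Fix}(\rho)$ and so restricts to a $p$-group $\bar S\leq\Aut(G^\rho)$; since $\rho$ acts trivially on $G^\rho$ we have $|\bar S|\leq|S|/p<|S|$, and one checks that $G^{\langle\sigma,\rho\rangle}=(G^\rho)^{\bar\sigma}$ and $(G^\rho)^{\bar S}=G^S$, where $\bar\sigma$ is the image of $\sigma$. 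Applying the induction hypothesis to $\bar S$ acting on $G^\rho$ (with $\bar\sigma$, or with any order-$p$ element of $\bar S$ when $\bar\sigma$ is trivial) gives $G^{\langle\sigma,\rho\rangle}\rightarrow_p^* G^S$, and concatenating the two chains proves the lemma.

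Finally I would deduce local confluence using Sylow's theorems to place $\sigma$ and $\tau$ in a common $p$-group. Both $\langle\sigma\rangle$ and $\langle\tau\rangle$ lie in Sylow $p$-subgroups of $\Aut(G)$, and by Sylow conjugacy there is $g\in\Aut(G)$ with $\tau':=g^{-1}\tau g$ lying in the same Sylow $p$-subgroup $S$ as $\sigma$. Since $g$ is a graph automorphism carrying $\mathrm{Fix}(\tau')=g^{-1}(\mathrm{Fix}(\tau))$ onto $\mathrm{Fix}(\tau)$, it induces an isomorphism $G^{\tau'}\cong G^\tau$, so as unlabelled graphs $G^\tau=G^{\tau'}$. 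The Key Lemma now gives $G^\sigma\rightarrow_p^* G^S$ and $G^{\tau'}\rightarrow_p^* G^S$, exhibiting $G^S$ as a common reduct of $G^\sigma$ and $G^\tau$. Newman's Lemma then yields confluence, and termination supplies existence, completing the proof. I expect the Key Lemma to be the main obstacle: its inductive step must juggle three graphs ($G$, $G^\rho$, and the eventual $G^S$) and descend to the quotient action of $\bar S$ on $G^\rho$, with care needed to verify that the various fixed-point subgraphs match up under restriction. A secondary subtlety is the Sylow bookkeeping, which rests on the facts that conjugate automorphisms have isomorphic fixed-point subgraphs and that $\rightarrow_p$ is well-defined on isomorphism classes.
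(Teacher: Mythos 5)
Your proposal is correct, but it takes a genuinely different route from the paper's. The paper derives uniqueness from the mod-$p$ Lov\'asz vector: Theorem~\ref{thm:inv} shows that each reduction step preserves $[\hom(G_i,\cdot)]_p$ for every test graph $G_i$, and Lemma~\ref{lemModLovasz} (proved by an induction that converts homomorphism counts into injective-homomorphism counts) shows that this vector determines a graph with no order-$p$ automorphism up to isomorphism; hence any two reduced forms of $G$, having equal vectors, must be isomorphic. Your argument never counts homomorphisms at all: you treat $\rightarrow_p$ as an abstract rewriting system on isomorphism classes, get termination from the strict decrease in vertex count, and get local confluence by conjugating $\tau$ into a Sylow $p$-subgroup $P$ containing $\sigma$ and showing, via your Key Lemma (induction on $|S|$ through the nontrivial centre of a $p$-group), that $G^\sigma$ and $G^\tau$ both reduce to $G^P$; Newman's Lemma then finishes. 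Each approach buys something. Yours is self-contained finite group theory, exhibits an explicit common reduct $G^P$, and in fact proves confluence of the rewriting system itself, which is more than uniqueness of normal forms. The paper's is shorter given machinery it needs anyway --- Lemma~\ref{lemModLovasz} and its rooted analogue, Lemma~\ref{lemModLovaszRooted}, are exactly the engine of the pinning gadgets in Section~\ref{secPinning} --- and it shows the stronger fact that the reduced form is an invariant of the mod-$p$ homomorphism counts alone. Two details in your write-up deserve explicit treatment: when $\bar\sigma$ is trivial but $\bar S$ is not, you need the extra prepended step $G^\rho \rightarrow_p (G^\rho)^{\sigma'}$ before the inductive hypothesis applies, and in the Sylow step you should verify that $\mathrm{Fix}(\tau')=g^{-1}(\mathrm{Fix}(\tau))$ and that $g$ restricts to an isomorphism of the induced subgraphs, so that $G^{\tau'}$ and $G^\tau$ coincide as unlabelled graphs; both are routine and you flagged them yourself.
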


The proof, which uses the concept of ``Lov\'asz vector'' of a graph, is presented in the next section.
We can now state main result.

\begin{thm} \label{thmTreeDichotomy}
If $H$ is a tree, then \parityHcol{} is $\parityP$-complete if the reduced form obtained by reducing $H$ by involutions is non-trivial,
\ie, has more than one vertex. Otherwise it is solvable in polynomial time.
\end{thm}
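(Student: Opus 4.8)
The plan is to collapse the problem to its reduced form and then split into an easy and a hard case. First I would invoke Theorem~\ref{thm:inv} with $p=2$: a single step $H\to_2 H^\sigma$ leaves $\hom(G,\cdot)$ unchanged modulo~$2$ for every instance~$G$, so \parityHcol{} and $\oplus H^\sigma$-\textsc{Colouring} compute literally the same function of~$G$. Iterating along a reduction sequence, and appealing to Theorem~\ref{thmConfluence} so that the endpoint is well defined, \parityHcol{} is polynomial-time equivalent to $\oplus H^*$-\textsc{Colouring}, where $H^*$ is the unique reduced form of~$H$. It therefore suffices to classify $\oplus H^*$-\textsc{Colouring}.

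Second, I would show that when $H$ is a tree its reduced form $H^*$ is again a tree (or trivial). The key observation is the classical fact that an involution of a tree either fixes a vertex, in which case its fixed points induce a subtree, or inverts a central edge, in which case it has no fixed points at all; either way $H^\sigma$ is a tree or empty, and it is loopless since $H$ is. As each reduction of a tree yields a smaller tree, the reduced form is a tree. Moreover, since every nontrivial automorphism of a tree swaps two isomorphic branches and hence forces an involution into the group it generates, being involution-free is the same as being \emph{asymmetric} for trees; thus $H^*$ is an asymmetric tree, possibly a single vertex. If $H^*$ has at most one vertex, then $\hom(G,H^*)$ equals $1$ when $G$ is edgeless and $0$ otherwise, so \parityHcol{} is trivially in polynomial time; this is the easy side of the dichotomy.

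Third, the substance is the hard side: for an asymmetric tree $T$ with at least two vertices I must prove $\oplus T$-\textsc{Colouring} is \parityP-complete. Membership is immediate since $\hom(\cdot,T)$ lies in $\numP$. For hardness I would use gadget (interpolation) techniques over $\GF(2)$. Attaching a rooted tree $R$ at a vertex $v$ of the instance multiplies the count by a weight $w_R(c)=\hom(R,T\mid\text{root}\mapsto c)\bmod 2$ that depends only on the colour~$c$ of~$v$; the achievable weight vectors form a subalgebra $W\subseteq\GF(2)^{V(T)}$ containing $\mathbf{1}$ and closed under the adjacency operator of~$T$ and under pointwise product. Any automorphism of $T$ forces $w(\alpha c)=w(c)$, so $W$ can separate two colours only if no automorphism identifies them; the crux is the converse, namely that for asymmetric $T$ the gadgets separate every pair of colours and in fact realise the indicator of each single colour modulo~$2$. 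This is a modular, rooted analogue of Lov\'asz's theorem, and it is exactly where the machinery developed for Theorem~\ref{thmConfluence} is reused.

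Finally, granted single-colour ``pinning'', I would encode a known \parityP-complete problem. Pinning lets me restrict instance vertices to a chosen pair of colours and, more generally, build two-terminal gadgets realising prescribed relations modulo~$2$ out of the edges of~$T$; the remaining task is to exhibit a relation that genuinely encodes, say, $\parityIndSet$ (or $\paritySAT$) rather than collapsing to an easily counted quantity such as the number of proper $2$-colourings. I expect the separation/pinning lemma of the previous paragraph to be the main obstacle: asymmetry of $T$ removes the only ``symmetry'' obstruction to separating colours, but converting this into an effective $\GF(2)$-spanning statement for rooted-tree gadgets is the technical heart of the argument, with the subsequent reduction being comparatively routine gadgetry.
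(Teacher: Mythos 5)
Your overall architecture matches the paper's: collapse $H$ to its involution-free form (Lemma~\ref{lemInvolutions}/Theorem~\ref{thm:inv}, with uniqueness from Theorem~\ref{thmConfluence}), observe that reducing a tree yields a tree or the null graph and that involution-free trees are asymmetric (Lemma~\ref{lemInvFreeMeansAsymmetric}), dispose of the trivial cases directly, and for the hard case build pinning gadgets from the rooted mod-2 Lov\'asz vector (Lemmas~\ref{lemModLovaszRooted}--\ref{lemCanImplementBasisVectors}, Theorem~\ref{thmPinning}, Corollary~\ref{corPinning}). Up to that point your proposal is sound, modulo a small imprecision: the weight vectors of \emph{single} rooted gadgets are not closed under addition, so the $\GF(2)$-algebra you describe only exists once one passes to formal sums of gadgets, i.e., to Turing reductions making several oracle calls, as the paper does with $\vH(\Gamma_1+\cdots+\Gamma_r)$.

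The genuine gap is the final step, which you set aside as ``comparatively routine gadgetry'': the reduction from \parityIndSet{} to \parityTcol{} for an asymmetric tree $T$ with at least two vertices. Pinning only lets you force instance vertices to receive prescribed colours of $T$; after that, the only binary relation available is the edge relation of a loop-free bipartite graph, and nothing in your proposal explains why this simulates the independent-set constraint modulo~2 rather than collapsing to an easily counted quantity (your own worry). This is exactly where the paper does its tree-specific work: Lemma~\ref{lemVertsDeg2} produces a degree-2 vertex $e_0$ adjacent to a leaf $l$; one then picks a vertex $e_k$ of even degree with no even-degree vertices strictly inside the path from $e_0$ to $e_k$; and Lemma~\ref{lemPathsOfLengthk} shows that an \emph{even} number of neighbours of the first internal vertex $v_1$ of that path admit an odd number of length-$k$ walks to $e_k$. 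The instance $G$ is 2-stretched, apex vertices $R$ and $B$ are attached to the $G$-vertices and (via length-$k$ paths) to the stretch vertices, and $R,B$ are pinned to $e_0,e_k$ using Corollary~\ref{corPinning}. The walk-parity lemma then makes every colouring in which two adjacent $G$-vertices both receive $v_1$ cancel modulo~2, while each independent set of $G$ (vertices coloured $v_1$ ``in'', $l$ ``out'') extends in exactly one way (Lemma~\ref{lemReductionToIS}). Without this construction, or a substitute for it, your proof stops short of hardness. Note also that your assessment of where the difficulty sits is inverted: the pinning machinery follows fairly smoothly from the rooted Lov\'asz-vector argument you already have, whereas the walk-parity encoding of \parityIndSet{} is the technical heart of the tree dichotomy.
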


We conjecture that this result holds for graphs in general.  The conjecture is unresolved, though 
G\"obel, Goldberg and Richerby~\cite{GobelEtAl13} recently extended our result from trees 
to cactus graphs.  One could extend the conjecture to \Hcol{p}, for primes $p>2$.  Specifically,
one might conjecture that, for each $p$, the set of reduced forms~$H$ corresponding to polynomial-time 
cases of \Hcol{p} is finite (and that all other reduced forms correspond to $\numP$-complete cases).
However, we do not go that far here.

\subsection{The Lov\'asz vector of a graph}
We need a modular version of the Lov\'asz vector~\cite[\S2.3]{HellNesetril90} of a graph.

\begin{defn} \label{defnLovaszVector}
Let $p$ be a prime, and $G_1, G_2, \ldots$ be a fixed enumeration of all pairwise non-isomorphic graphs. 
(Thus every graph is isomorphic to exactly one graph in the sequence.)
The {\em mod-$p$ Lov\'asz vector} of a graph $H$ is the sequence 
$\big([\hom(G_i,H)]_p:i \geq 1\big)$.
\end{defn}

We show that the mod-$p$ Lov\'asz vector determines a graph, provided the graph has no
automorphisms of order~$p$.  First recall some elementary facts about groups. 

\begin{thm} [Cauchy's Group Theorem]
If a prime $p$ divides the order of a finite group $G$, then $G$ contains at least one element of order $p$.
\end{thm}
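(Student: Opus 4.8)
The plan is to use the elegant counting argument due to McKay, which is very much in the spirit of the orbit-partitioning ideas already employed in Lemma~\ref{lemInvolutions} and Theorem~\ref{thm:inv}: rather than induct on the order of the group via the class equation, I would exhibit a set on which the cyclic group $\mathbb{Z}/p$ acts and read off the conclusion from a congruence on the number of fixed points. Concretely, writing $e$ for the identity of $G$, consider the set of $p$-tuples whose product is the identity,
\begin{equation*}
S = \{(x_1, x_2, \ldots, x_p) \in G^p : x_1 x_2 \cdots x_p = e\}.
\end{equation*}
The first step is to count $S$. A tuple in $S$ is determined by choosing $x_1, \ldots, x_{p-1}$ arbitrarily in $G$ and then setting $x_p = (x_1 \cdots x_{p-1})^{-1}$, which is forced and always lies in $G$. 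Hence $|S| = |G|^{p-1}$.

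Next I would let $\mathbb{Z}/p$ act on $S$ by cyclic rotation of coordinates, the generator sending $(x_1, x_2, \ldots, x_p)$ to $(x_2, \ldots, x_p, x_1)$. The one point that needs checking is that this is well defined, \ie, that the rotated tuple still lies in $S$; this follows because in any group $a b = e$ implies $b a = e$, applied with $a = x_1$ and $b = x_2 \cdots x_p$. Since $p$ is prime, every orbit has size dividing $p$, so each orbit has size $1$ or $p$. An orbit has size $1$ precisely when the tuple is fixed by rotation, which forces $x_1 = x_2 = \cdots = x_p$; thus the fixed points are exactly the constant tuples $(x, x, \ldots, x)$ with $x^p = e$.

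Finally I would count modulo~$p$. Writing $F$ for the number of fixed points, the orbits of size $p$ contribute a multiple of $p$ to $|S|$, so $F \equiv |S| = |G|^{p-1} \pmod{p}$. Since $p \mid |G|$ and $p - 1 \geq 1$, the right-hand side is $\equiv 0 \pmod{p}$, whence $p \mid F$. The constant tuple $(e, \ldots, e)$ is one fixed point, so $F \geq 1$; combined with $p \mid F$ this gives $F \geq p \geq 2$. Therefore there is at least one further fixed point $(x, \ldots, x)$ with $x \neq e$ and $x^p = e$, and since $p$ is prime such an $x$ has order exactly $p$, as required. The main thing to get right is the bookkeeping in this last step --- that $p-1 \geq 1$ so that $p$ genuinely divides $|G|^{p-1}$, and the observation that the known fixed point $(e, \ldots, e)$ pushes the count of fixed points from a multiple of $p$ up to a \emph{positive} multiple of $p$; everything else is routine verification.
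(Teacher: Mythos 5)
Your proof is correct and complete---it is the classical counting argument of McKay. Note, however, that the paper itself offers no proof of this statement at all: Cauchy's Group Theorem is quoted there as a standard background fact (alongside Lagrange's Theorem), and is used only to derive Lemma~\ref{lemInvolutionIffEvenOrder}, which says that a graph has an automorphism of order~$p$ exactly when $p$ divides the order of its automorphism group. So there is no proof in the paper to compare against step by step; what can be said is that you have supplied a self-contained argument, and that your choice of argument fits the paper's methods particularly well. The device of letting $\mathbb{Z}/p$ act on a finite set, discarding the orbits of size~$p$ when counting \emph{modulo}~$p$, and reading off the conclusion from the fixed points is precisely the mechanism behind Lemma~\ref{lemInvolutions} and Theorem~\ref{thm:inv}, where the cyclic group generated by~$\sigma$ acts on the set of $H$-colourings that use a colour moved by~$\sigma$, and the absence of fixed points forces that set to have size divisible by~$p$. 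Every step of your write-up checks out: $|S|=|G|^{p-1}$ because the last coordinate is forced; the rotation action is well defined because $ab=e$ implies $ba=e$; orbit sizes divide the prime~$p$ by orbit--stabilizer; the fixed points are exactly the constant tuples $(x,\ldots,x)$ with $x^p=e$; and the congruence $F\equiv|G|^{p-1}\equiv 0\pmod p$, together with the guaranteed fixed point $(e,\ldots,e)$, forces a further fixed point with $x\neq e$, which then has order exactly~$p$ since its order divides~$p$ and is not~$1$.
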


\begin{thm} [Lagrange's Theorem]
 For any finite group $G$ the order of any subgroup of $G$ divides the order of $G$.
\end{thm}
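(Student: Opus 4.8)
The statement is the classical theorem of Lagrange, so rather than anything specific to the graph-homomorphism setting the plan is simply to recall its standard proof by coset decomposition. Write $H$ for the given subgroup of the finite group $G$. First I would introduce the left cosets $gH = \{gh : h \in H\}$ for $g \in G$, and check that the relation defined by $g \sim g'$ if and only if $g^{-1}g' \in H$ is an equivalence relation on $G$ whose classes are precisely these left cosets. This uses only the group axioms (reflexivity from $e \in H$, symmetry from closure under inverses, transitivity from closure under products), and it shows that the distinct left cosets partition $G$.

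The second step is to show that every coset has the same size as $H$. For a fixed $g$, the map $\lambda_g : H \to gH$ given by $\lambda_g(h) = gh$ is surjective by the definition of $gH$ and injective by left cancellation (if $gh = gh'$ then $h = h'$). Hence $|gH| = |H|$ for every $g \in G$.

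Finally I would combine the two steps. Let $[G:H]$ denote the number of distinct left cosets. Since these cosets partition $G$ into blocks each of cardinality $|H|$, counting $G$ block by block gives
\[
|G| = [G:H]\cdot|H|,
\]
so $|H|$ divides $|G|$, as required. There is no genuine obstacle here: the only points needing care are that the left cosets actually partition $G$ (rather than merely cover it) and that left multiplication is a bijection, both of which follow at once from the group axioms, while finiteness of $G$ makes the final counting step immediate. In the wider development this fact is needed only as an ingredient — alongside Cauchy's theorem — for the argument that the mod-$p$ Lov\'asz vector determines a graph having no automorphism of order~$p$, and thence for the uniqueness of reduced forms in Theorem~\ref{thmConfluence}.
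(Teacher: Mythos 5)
Your coset argument is correct and complete: the left cosets partition $G$, left multiplication gives a bijection $H \to gH$, and counting blocks yields $|G| = [G:H]\cdot|H|$. The paper itself states Lagrange's Theorem without proof, as classical background feeding into Lemma~\ref{lemInvolutionIffEvenOrder}, so your standard proof is exactly the expected one and there is nothing to compare it against.
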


It follows that:

\begin{lem} \label{lemInvolutionIffEvenOrder}
For any prime $p$, a graph has an automorphism of order~$p$ if and only if the order of its automorphism group is divisible by~$p$.
\end{lem}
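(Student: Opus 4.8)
The plan is to prove Lemma~\ref{lemInvolutionIffEvenOrder} as a direct consequence of the two group-theoretic facts just quoted, Cauchy's theorem and Lagrange's theorem, applied to the automorphism group $\Aut(H)$ of the graph. The statement is a biconditional, so I would establish the two implications separately.

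First I would prove the forward direction: if $H$ has an automorphism $\sigma$ of order~$p$, then $p$ divides $|\Aut(H)|$. This is the easy half. The cyclic subgroup $\langle\sigma\rangle$ generated by $\sigma$ is a subgroup of $\Aut(H)$ of order exactly~$p$ (since $\sigma$ has order $p$). By Lagrange's theorem, the order of this subgroup divides $|\Aut(H)|$, so $p \mid |\Aut(H)|$. This requires essentially no calculation.

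For the converse I would argue that if $p$ divides $|\Aut(H)|$, then $H$ has an automorphism of order~$p$. This is precisely Cauchy's theorem applied to the finite group $G = \Aut(H)$: since $p$ is prime and $p \mid |\Aut(H)|$, the group $\Aut(H)$ contains an element of order~$p$, and that element is by definition an automorphism of $H$ of order~$p$. One small point worth noting is that $\Aut(H)$ is a finite group whenever $H$ is a finite graph, so both cited theorems apply; I would mention this finiteness explicitly since Cauchy's and Lagrange's theorems are stated for finite groups.

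There is no genuine obstacle here — the lemma is a transparent packaging of the two standard results, and the phrase ``It follows that'' preceding the statement signals exactly this. The only thing to be careful about is making sure each cited theorem is invoked in the correct direction (Lagrange for the forward implication, Cauchy for the converse) and that we are applying them to the automorphism group rather than to $H$ itself.
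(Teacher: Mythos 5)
Your proof is correct and follows exactly the same route as the paper's: Lagrange's theorem (applied to the cyclic subgroup $\langle\sigma\rangle$) for the forward implication and Cauchy's theorem for the converse, both applied to the finite group $\Aut(H)$. Your explicit remark about finiteness of $\Aut(H)$ is a small but reasonable addition that the paper leaves implicit.
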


\begin{proof}
The automorphisms of a graph form a group. If this group contains an element of order~$p$, 
then the order of the automorphism group is divisible by~$p$, by Lagrange's Theorem. 
If the order of the automorphism group is divisible by $p$, then it contains an automorphism of order $p$ by Cauchy's Group Theorem.
\end{proof}

Now the claim.

\begin{lem} \label{lemModLovasz}
Suppose $p$ is a prime, and  $H$ and $H'$ are two graphs,
neither of which has an automorphism of order~$p$.
Then $H$ and $H'$ are isomorphic if and only if they have the same mod-$p$ Lov\'asz vector.
\end{lem}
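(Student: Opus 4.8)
The forward direction is immediate: isomorphic graphs satisfy $\hom(G,H)=\hom(G,H')$ for every $G$, hence have identical mod-$p$ Lov\'asz vectors. So the plan is to prove the converse, adapting the classical argument for the integer Lov\'asz vector in \cite[\S2.3]{HellNesetril90} to the modular setting. The one genuinely new ingredient is a congruence that exploits the hypothesis that neither graph has an automorphism of order~$p$.

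First I would pass from $\hom$ to $\inj$, the count of \emph{injective} homomorphisms. Every homomorphism $\phi\colon G\to H$ factors as $G\to G/\theta\to H$, where $\theta$ is the partition of $V(G)$ into fibres of $\phi$ and the induced map $G/\theta\to H$ is injective; grouping quotients by isomorphism type gives
$$\hom(G,H)=\sum_{F} a_{G,F}\,\inj(F,H),$$
where $a_{G,F}=|\{\theta: G/\theta\cong F\}|$ is a non-negative integer. Since $G/\theta$ has at most $|V(G)|$ vertices, with equality only for the discrete partition (for which $F\cong G$), the matrix $(a_{G,F})$ is lower triangular with unit diagonal once graphs are ordered by number of vertices. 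Such a matrix is invertible over $\mathbb{Z}$, and for each fixed $F$ the inverse expresses $\inj(F,H)$ as a \emph{finite} integer combination of the values $\hom(G,H)$ with $|V(G)|\le|V(F)|$. Reducing this identity modulo~$p$ shows that the mod-$p$ Lov\'asz vector of $H$ determines $[\inj(F,H)]_p$ for every graph~$F$; in particular, equal mod-$p$ Lov\'asz vectors for $H$ and $H'$ force $[\inj(F,H)]_p=[\inj(F,H')]_p$ for all~$F$.

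The crux is the congruence $[\inj(H,H)]_p\neq0$, and here the hypothesis enters. I would first record the standard fact that an injective endomorphism of a finite graph is an automorphism: such a map is a vertex-bijection carrying edges injectively to edges, hence, by finiteness of $E(H)$, \emph{onto} $E(H)$, so its inverse is also edge-preserving. Thus $\inj(H,H)=|\Aut(H)|$. Since $H$ has no automorphism of order~$p$, Lemma~\ref{lemInvolutionIffEvenOrder} gives $p\nmid|\Aut(H)|$, whence $[\inj(H,H)]_p\neq0$; the same holds for $H'$.

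Combining the two steps, $[\inj(H,H')]_p=[\inj(H,H)]_p\neq0$, so $\inj(H,H')>0$: there is an injective homomorphism $H\to H'$ and therefore $|V(H)|\le|V(H')|$, while the symmetric argument gives $|V(H')|\le|V(H)|$, so the vertex counts agree. An injective homomorphism between graphs with equally many vertices is a vertex-bijection carrying distinct edges to distinct edges, so $|E(H)|\le|E(H')|$, and symmetrically $|E(H')|\le|E(H)|$. With equal vertex and edge counts, this injective homomorphism maps $E(H)$ bijectively onto $E(H')$, hence is an isomorphism, and $H\cong H'$ follows. The main obstacle is the third step: identifying $\inj(H,H)$ with $|\Aut(H)|$ and invoking Cauchy's theorem via Lemma~\ref{lemInvolutionIffEvenOrder} is precisely what makes the hypothesis on order-$p$ automorphisms indispensable, since without it the leading term $[\inj(H,H)]_p$ could vanish and the inversion argument would carry no information.
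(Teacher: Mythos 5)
Your proposal is correct and takes essentially the same route as the paper: both reduce the problem to showing that equal mod-$p$ Lov\'asz vectors force $\inj(G,H)\equiv\inj(G,H')\pmod p$ via the quotient-by-partitions decomposition (your unitriangular matrix inversion is the paper's induction on $|V(G)|$ repackaged), and both then take $G=H$ and $G=H'$, using $\inj(H,H)=|\Aut(H)|\not\equiv 0\pmod p$ from Lemma~\ref{lemInvolutionIffEvenOrder} to obtain injective homomorphisms in both directions. The only cosmetic difference is that you spell out why mutually injective homomorphisms yield an isomorphism, a step the paper leaves implicit.
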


\begin{proof}
Clearly the condition is necessary: two isomorphic graphs have the same mod-$p$ Lov\'asz vector. 
Now we need to prove that it is sufficient.  
This proof is similar to the proof of Theorem 2.11 in Hell and Ne\v set\v ril's monograph \cite{HellNesetril04}. 

So suppose $H$ and $H'$ have the same mod-$p$ Lov\'asz vector, that is
\begin{equation} \label{eqnHoms}
 \hom(G,H) \equiv \hom(G,H')\pmod p
\end{equation}
for all graphs $G$.
We first observe that, in order to show that $H$ and $H'$ are isomorphic, 
it is sufficient to prove that for every graph~$G$:
\begin{equation} \label{eqnInjections}
 \inj(G,H) \equiv \inj(G,H')\pmod p,
\end{equation}
where $\inj(G,H)$ denotes the number of injective homomorphisms from $G$ to~$H$.
To see this, first take $G = H$ in the above congruence~(\ref{eqnInjections}). 
The left hand side of the congruence is just the order of the automorphism group of~$H$,
which, by Lemma~\ref{lemInvolutionIffEvenOrder}, is not congruent to~0 {\em modulo}~$p$.
Therefore, the right hand side, $\inj(H,H')$, is also different from from~0 {\em modulo}~$p$
and, in particular, there exists an injective homomorphism from $H$ to~$H'$. 
Similarly, if we take $G=H'$ we find an injective homomorphism the other way, 
and thus an isomorphism between $H$ and~$H'$.

We will prove that equation (\ref{eqnHoms}) implies equation (\ref{eqnInjections}) 
by induction on the size of~$G$. If $G$ only has one vertex then every homomorphism from~$G$ 
to any other graph is injective, so the equality holds. Now assume that the equality 
is true for all graphs that have fewer vertices than~$G$. 
The proof strategy is essentially to count those homomorphisms which are not injections, 
and show that there are the same number of these, so there must be the same number of injective homomorphisms.

For a partition $\Theta=\{S_i:i\in I\}$ of the vertex set $V(G)$ of a graph $G$, 
define the quotient graph $G/\Theta$ as follows.  The vertex set of $G/\Theta$ is the 
index set~$I$.  There is an edge between $i,j\in I$ in $G/\Theta$ iff there is some edge
joining a vertex in $S_i$ to a vertex in~$S_j$ in~$\Theta$.  (It may happen that 
$i=j$, in which case $G/\Theta$ has a loop at~$i$.)
The number of homomorphisms is equal to the number of injective homomorphisms 
plus the number of homomorphisms which are not injective. 
In order to count the number of homomorphisms which are not injective, 
we consider the different ways in which the colours of $H$ can be assigned to vertices of $G$.

A colouring of $G$ with $H$ induces a partition of $G$ in the obvious way, 
with vertices which are given the same colour assigned to the same part of the partition. 
If we call this partition $\Theta$, then any $H$-colouring of $G$ can be considered as an injective $H$-colouring 
of $G / \Theta$, since each vertex of $G / \Theta$ is associated with exactly one colour from~$H$. 
Let $\iota$ be the partition consisting of a single block for each vertex 
(\ie, the partition associated with injective homomorphisms from $G$ to~$H$). Then we have both
\begin{align*} 
\hom(G,H) &=\inj(G,H) + \sum_{\Theta \neq \iota} \inj(G / \Theta, H) \\
\noalign{\noindent and}
\hom(G,H')&= \inj(G,H') + \sum_{\Theta \neq \iota} \inj(G / \Theta, H').
\end{align*}
Since $G / \Theta$ is necessarily smaller than $G$ if $\Theta \neq \iota$, 
we know by the induction hypothesis that $\inj(G / \Theta, H) \equiv \inj(G / \Theta, H')\pmod p$,
and since $\hom(G,H) \equiv \hom(G,H')\pmod p$ by assumption, we do have $\inj(G,H) \equiv \inj(G,H')\pmod p$, 
as required.

Note that the largest graph~$G$ considered in the above inductive argument has the same number 
of vertices as~$H$.  So if $H$ and~$H'$ are not isomorphic then there must be a graph~$G$ 
with at most as many vertices as $H$ that distinguishes $H$ and~$H'$, that is,
$\hom(G,H)\not\equiv\hom(G,H')\pmod p$.
\end{proof}

\begin{proof}[Proof of Theorem~\ref{thmConfluence}]
Suppose $G\rightarrow_p^* G^*$ and $G\rightarrow_p^* G^\dag$, where $G^*$ and $G^\dag$
have no automorphisms of order~$p$.
Theorem~\ref{thm:inv} says the reduction operation 
$\rightarrow_p$ preserves the mod-$p$ Lov\'asz vector,
so $G^*$ and $G^\dag$ have the same vector.   
On the other hand, Lemma~\ref{lemModLovasz} 
above says that the mod-$p$ Lov\'asz vector characterises (isomorphism classes of) graphs 
with no automorphisms of order~$p$, so $G^*$ and $G^\dag$ are isomorphic.
\end{proof}

\section{Pinning colours to vertices} \label{secPinning}

We would like to be able to count the number of $H$-colourings of a given graph~$G$ 
in which certain vertices of~$G$ are forced to receive certain colours from~$H$.
This would allow us to isolate a suitable ``hard'' subgraph~$H'$ of~$H$, and hence
reduce the known hard $H'$-colouring problem to the particular 
$H$-colouring problem that interests us.
We achieve this by building gadgets, which are graphs with a distinguished
vertex, with the following property:  effectively, only a certain set of colours can be 
applied to the distinguished vertex of a gadget. 
By attaching these gadgets to a vertex  
of~$G$, we can restrict that vertex to be coloured with a particular set of colours.

\begin{defn}A {\em rooted graph} is a pair $(G,v)$ where $G$ is a graph 
and $v \in V(G)$ is a distinguished vertex of~$G$ (referred to as the {\em root}).
\end{defn}

In essence, we want to show that for any two distinct colours $h_1, h_2\in V(H)$ in a given~$H$, 
there exists some rooted graph $(\Gamma,\gamma)$ such that the number of ways of $H$-colouring $\Gamma$ 
with $\gamma$ receiving $h_1$ is different, \textit{modulo}~2, to the number of ways of $H$-colouring 
$\Gamma$ with $\gamma$ receiving~$h_2$. 
(In fact, as we can see, we can find such a rooted graph~$\Gamma$ for all prime moduli.) 
Suppose $G$ is an instance graph with distinguished root vertex~$v$.
We can then use rooted graphs such as $(\Gamma, \gamma)$ 
to pick out the colourings of~$G$ in which vertex~$v$ receives a colour from some particular 
subset of the colours.
Roughly, we do this by attaching a copy of $\Gamma$ to~$G$, 
identifying $\gamma$ and~$v$.  Call the resulting graph~$G'$.
Suppose a colouring of~$G$ with vertex~$v$ receiving $h_1$ extends to a colouring 
of $G'$ in (say) an odd number of ways.  Then a colouring with $v$ receiving $h_2$
will extend in an even number of ways.  In this way we have effectively ``cancelled'' the colourings
of~$G$ with $v$ coloured $h_2$, while leaving untouched those with $v$ coloured~$h_1$.

The construction of the required gadgets rests on a rooted version of Lemma~\ref{lemModLovasz}
Before we give the proof, we need to define rooted versions of a few concepts we have
already encountered. 

\begin{defn}A {\em homomorphism} (repectively, {\em isomorphism}) 
between two rooted graphs $(G,v)$ and $(G',v')$ is a graph 
homomorphism (respectively, isomorphism) 
$\phi:V(G)\rightarrow V(G')$ with $\phi(v) = v'$. 
An {\em automorphism\/} of rooted graph\/ $(G,v)$ is an isomorphism of $(G,v)$ 
to itself.
\end{defn}

\begin{defn}
We denote the number of homomorphisms from rooted graph $(G,g)$ to rooted graph 
$(H,h)$ by\/ $\homstar((G,g), (H,h))$.
If the roots are implied by the context we will sometimes suppress them in the above notation, 
and just write\/ $\homstar(G,H)$.
 
Similarly, we denote the number of injective homomorphisms from rooted graph $(G,g)$ to $(H,h)$ 
by\/ $\injstar((G,g),(H,h))$ and, again, we may suppress the specified vertices if they are implied by the context, 
instead writing\/ $\injstar(G,H)$.
\end{defn}

Finally, we will use the concept of the Lov\'asz vector of a rooted graph. 
For us, this will be the vector which counts, for a given rooted graph $(H,h)$, 
the number of homomorphisms to $(H,h)$ from every other rooted graph.

\begin{defn} \label{defnLovaszVectorRooted}
Let $G_1, G_2, \ldots$ be a fixed enumeration of all pairwise non-isomorphic rooted graphs.
Then the mod-$p$ Lov\'asz vector of a rooted graph~$H$ is the sequence $([\homstar(G_i,H)]_p:i \geq 1)$.
\end{defn}

We will use {\it parity Lov\'asz vector\/} is an alternative name for mod-2 Lov\'asz vector.

\begin{lem} \label{lemModLovaszRooted}
Suppose $p$ is a prime, and  $H$ and $H'$ are two rooted graphs 
neither of which have an automorphism of order $p$.
Then $H$ and $H'$ are isomorphic if and only if they have the same mod-$p$ Lov\'asz vector.
\end{lem}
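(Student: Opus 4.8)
The plan is to follow the proof of Lemma~\ref{lemModLovasz} almost verbatim, replacing every unrooted object by its rooted counterpart. First I would record the rooted analogue of Lemma~\ref{lemInvolutionIffEvenOrder}: the rooted automorphisms of $(H,h)$ form a group under composition, so by exactly the Cauchy--Lagrange argument given there, $(H,h)$ has an automorphism of order~$p$ if and only if $p$ divides $|\Aut(H,h)|$. Consequently the hypothesis that neither $H$ nor $H'$ has an automorphism of order~$p$ yields $\injstar(H,H) = |\Aut(H,h)| \not\equiv 0 \pmod p$, and likewise for $H'$. Here I use that an injective self-homomorphism of a finite rooted graph is a bijection on vertices, hence, by finiteness of the edge set, reflects edges as well as preserving them, and it fixes the root; so it is a rooted automorphism.

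Necessity is immediate, since isomorphic rooted graphs plainly have the same mod-$p$ Lov\'asz vector. For sufficiency I would again reduce the problem: it suffices to show that
\[
\homstar(G,H) \equiv \homstar(G,H') \pmod p \text{ for all rooted }G
\]
implies $\injstar(G,H) \equiv \injstar(G,H') \pmod p$ for all rooted~$G$. Granting this, taking $G=(H,h)$ gives $\injstar(H,H') \not\equiv 0 \pmod p$ by the previous paragraph, so there is an injective rooted homomorphism $(H,h)\to(H',h')$; taking $G=(H',h')$ gives one the other way; and two injective homomorphisms between finite graphs of equal size compose to rooted automorphisms, forcing each to be a root-preserving isomorphism, so $(H,h)\cong(H',h')$.

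The core step is the $\homstar\Rightarrow\injstar$ implication, which I would prove by induction on $|V(G)|$. For the inclusion--exclusion I need the rooted quotient: given a partition $\Theta$ of $V(G)$, let $(G/\Theta,[v])$ be the quotient graph introduced in the proof of Lemma~\ref{lemModLovasz}, rooted at the block $[v]$ containing the root~$v$. Every rooted homomorphism $\phi\colon(G,v)\to(H,h)$ factors uniquely as the canonical quotient map onto $(G/\Theta_\phi,[v])$ followed by an injective rooted homomorphism, where $\Theta_\phi$ partitions $V(G)$ into the colour classes of~$\phi$ (the block of $v$ being $\phi^{-1}(h)$). This bijection gives
\begin{align*}
\homstar(G,H) &= \injstar(G,H) + \sum_{\Theta\neq\iota}\injstar(G/\Theta,H),\\
\homstar(G,H') &= \injstar(G,H') + \sum_{\Theta\neq\iota}\injstar(G/\Theta,H'),
\end{align*}
where $\iota$ is the partition into singletons, for which $(G/\iota,[v])=(G,v)$. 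When $\Theta\neq\iota$ the quotient has strictly fewer vertices than~$G$, so the induction hypothesis equates the corresponding summands modulo~$p$; combined with the assumed congruence for $\homstar$, this forces $\injstar(G,H)\equiv\injstar(G,H')\pmod p$, closing the induction. The base case $|V(G)|=1$ is trivial, since a one-vertex rooted graph admits at most one homomorphism to $(H,h)$, necessarily injective.

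The only genuinely new point compared with the unrooted argument---and the step I would check most carefully---is that the rooted structure is respected throughout the factorisation: the partition $\Theta_\phi$ must single out $\phi^{-1}(h)$ as the root block, the quotient must be rooted there, and the induced injective map must send $[v]$ to~$h$. This is bookkeeping rather than a deep obstacle, but it is where the rooted case could silently go wrong. Once it is in place the argument is identical to that of Lemma~\ref{lemModLovasz}, and the remark there about the largest~$G$ needed carries over, showing that non-isomorphic rooted graphs with no automorphism of order~$p$ are already distinguished by some rooted~$G$ with at most as many vertices as~$H$.
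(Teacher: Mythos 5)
Your proposal is correct and is essentially the paper's own proof: the paper simply says to repeat the argument of Lemma~\ref{lemModLovasz} with $\homstar$ and $\injstar$ in place of $\hom$ and $\inj$, rooting the quotient $(G,g)/\Theta$ at the block containing the root---exactly the bookkeeping point you identify and verify. Your additional details (the rooted Cauchy--Lagrange step and the factorisation through the rooted quotient) are faithful expansions of what the paper leaves implicit, not a different route.
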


\begin{proof}
As for Lemma~\ref{lemModLovasz}, but with $\homstar$ and $\injstar$ replacing $\hom$ and $\inj$.
In defining the quotient of a rooted graph $(G,g)$ by a partition~$\Theta=\{S_i:i\in I\}$, we define the 
root of $(G,g)/\Theta$ to be the vertex $i\in I$ such that $g\in S_i$.
\end{proof}

As with Lemma~\ref{lemModLovasz}, it can be seen that we need only finitely many terms of 
the mod-$p$ Lov\'asz vector to reconstruct $(H,h)$.

\subsection{Building Gadgets} \label{subsecGadgets}

In the following we return to \parityHcol, and are only interested in automorphisms of order two, or involutions. Note that many of the results in this section can be generalised to automorphisms of arbitrary prime order, but we only require the gadgets for the case $p=2$ in Section \ref{secTrees}, so only this case is presented here, for simplicity.

It will be useful to consider the case where $H$ and $H'$ have the same underlying graph but different roots (note that for $H$ and $H'$ to be non-isomorphic as rooted graphs, there can be no automorphism of~$H$ with takes $h$ to~$h'$, \ie, that $h$ and $h'$ lie in different orbits of the automorphism group of $H$). Since we will no longer be able to use the previous naming convention for the specified vertices, we will refer to the two roots 
in~$H$ as $x$ and~$y$. In the following, we will be assuming that $H$ is involution-free. As we saw in Section \ref{secReductionByInvolutions}, it suffices to consider the complexity of \parityHcol{} for involution-free~$H$.

Lemma \ref{lemModLovaszRooted} allows us to construct the following useful gadgets: 
given an involution-free graph~$H$ and two colours $x$ and $y$ which are in different orbits of $\Aut(H)$, there is a rooted graph $(\Gamma,\gamma)$ that 
distinguishes $x$ and~$y$.

\begin{lem} \label{lemHxHy}
 Given an involution-free graph $H$ and two vertices $x$ and $y$ which lie in different orbits of $\Aut(H)$, there exists a rooted graph $(\Gamma,\gamma)$ such that $\homstar((\Gamma,\gamma),(H,x)) \not \equiv \homstar((\Gamma,\gamma),(H,y)) \pmod 2$. 
\end{lem}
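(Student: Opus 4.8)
The plan is to read off $(\Gamma,\gamma)$ directly from the rooted mod-$p$ Lov\'asz characterisation, Lemma~\ref{lemModLovaszRooted}, applied with $p=2$ to the two rooted graphs $(H,x)$ and $(H,y)$. These two rooted graphs share the same underlying graph~$H$ but carry different roots, so the entire work of the proof is to verify that they meet the two hypotheses of Lemma~\ref{lemModLovaszRooted}: that they are non-isomorphic as rooted graphs, and that neither admits an automorphism of order~$2$.

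First I would check non-isomorphism. A rooted isomorphism $(H,x)\to(H,y)$ is, by definition, a graph isomorphism $\phi\colon H\to H$ with $\phi(x)=y$, i.e.\ an automorphism of~$H$ carrying $x$ to~$y$. Since $x$ and $y$ are assumed to lie in different orbits of $\Aut(H)$, no such automorphism exists, so $(H,x)\not\cong(H,y)$. Next I would check involution-freeness of each rooted graph. An automorphism of the rooted graph $(H,x)$ is exactly an automorphism of~$H$ that fixes~$x$, so the automorphism group of $(H,x)$ is the stabiliser $\Aut(H)_x$, which is a subgroup of $\Aut(H)$. Because $H$ is involution-free, $\Aut(H)$ contains no element of order~$2$ (equivalently, by Lemma~\ref{lemInvolutionIffEvenOrder}, its order is odd), and hence neither does any subgroup; in particular $\Aut(H)_x$ has no involution. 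The identical argument applies to $(H,y)$.

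With both hypotheses established, the contrapositive of Lemma~\ref{lemModLovaszRooted} (with $p=2$) gives that $(H,x)$ and $(H,y)$ have different parity Lov\'asz vectors. By Definition~\ref{defnLovaszVectorRooted} this means there is some rooted graph $(\Gamma,\gamma)$ in the enumeration for which $\homstar((\Gamma,\gamma),(H,x))\not\equiv\homstar((\Gamma,\gamma),(H,y))\pmod 2$, which is exactly the desired conclusion. I do not expect a substantive obstacle here, since the lemma does all the heavy lifting; the one point that deserves care is the second hypothesis, namely recognising that involution-freeness of the unrooted graph~$H$ transfers to the rooted graphs precisely because their automorphism groups are stabiliser subgroups of $\Aut(H)$, so that order-$2$ elements cannot appear in passing from $H$ to $(H,x)$ or $(H,y)$.
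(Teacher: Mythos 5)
Your proof is correct and follows essentially the same route as the paper: invoke Lemma~\ref{lemModLovaszRooted} with $p=2$ to conclude that the non-isomorphic rooted graphs $(H,x)$ and $(H,y)$ have different parity Lov\'asz vectors, then take a rooted graph witnessing the difference. In fact you are slightly more careful than the paper, which leaves implicit the verification that involution-freeness of $H$ passes to the rooted graphs via the stabiliser-subgroup argument you spell out.
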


\begin{proof}
 Since $(H,x)$ and $(H,y)$ are non-isomorphic as rooted graphs, they have different parity Lov\'asz vectors 
 by Lemma~\ref{lemModLovaszRooted}. Simply take $(\Gamma,\gamma)$ to be the first rooted graph for which the corresponding entries of the parity Lov\'asz vectors of $(H,x)$ and $(H,y)$ differ. 
\end{proof}

We will use rooted graphs such as those guaranteed by Lemma~\ref{lemHxHy} 
as ``gadgets'' in a reduction from the problem of counting restricted $H$-colourings 
(in which a given vertex of the instance graph is forced to be coloured with colours from a specified orbit of $\Aut(H)$)
to the problem of counting unrestricted $H$-colourings \textit{modulo}~$2$.

\begin{thm} \label{thmPinning}
Given an involution-free graph~$H$, an orbit $O$ of the automorphism group of~$H$, 
and an oracle for \parityHcol, it is possible to determine, in polynomial time, 
the parity of the number of $H$-colourings of a rooted graph~$G$ in which the root 
receives a colour from~$O$.
\end{thm}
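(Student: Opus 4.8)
The plan is to use the gadgets supplied by Lemma~\ref{lemHxHy} to set up and then solve a linear system over $\GF(2)$. Fix the involution-free graph~$H$, let $O_1,\dots,O_m$ be the orbits of $\Aut(H)$, and suppose the target orbit is $O=O_{j_0}$. Given the rooted instance $(G,v)$, write $N_h$ for the number of $H$-colourings of~$G$ in which~$v$ receives colour~$h$, and set $M_j=\sum_{h\in O_j}N_h$; the quantity we must produce is $[M_{j_0}]_2$, and note that $(M_j)_j$ is a vector in $\GF(2)^m$ that we are trying to probe one coordinate of.

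First I would record the two elementary facts about gadget attachment. For a rooted graph $(\Gamma,\gamma)$, let $G_\Gamma$ be obtained by identifying $\gamma$ with~$v$. Classifying colourings of $G_\Gamma$ by the colour~$h$ at the identified vertex gives
\[
\hom(G_\Gamma, H) = \sum_{h \in V(H)} N_h \, \homstar((\Gamma,\gamma),(H,h)).
\]
Moreover $\homstar((\Gamma,\gamma),(H,h))$ depends only on the orbit of~$h$: if $\sigma\in\Aut(H)$ sends $h$ to~$h'$, then post-composition with~$\sigma$ is a bijection between the two homomorphism sets. Writing $c_j(\Gamma)=[\homstar((\Gamma,\gamma),(H,h))]_2$ for any $h\in O_j$, one oracle call for \parityHcol{} on~$G_\Gamma$ therefore returns the $\GF(2)$ measurement $\sum_j c_j(\Gamma)\,M_j$. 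Attaching several gadgets at~$v$ multiplies responses colour by colour, so the achievable coefficient vectors $(c_j(\Gamma))_j$ are closed under the Hadamard product and include the all-ones vector (the single-vertex gadget). Summing oracle answers modulo~$2$ then lets me compute $\sum_j c_j M_j$ for every~$c$ in the $\GF(2)$-linear span of these vectors, and this span is an algebra since it is closed under both addition and (by bilinearity) the Hadamard product.

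The heart of the argument is a $\GF(2)$ analogue of the Stone--Weierstrass theorem. By Lemma~\ref{lemHxHy}, for every pair of distinct orbits there is a gadget whose response separates them, so the achievable responses separate the points $\{1,\dots,m\}$. A subalgebra of $\GF(2)^m$ that contains the constant~$1$ and separates points is all of $\GF(2)^m$: concretely, the indicator $e_{j_0}$ of the target orbit can be written as a product $\prod_{j'\neq j_0} g_{j'}$, where each factor $g_{j'}$ is either a separating response $f_{j'}$ or its complement $f_{j'}+1$, chosen so that $g_{j'}(j_0)=1$ and $g_{j'}(j')=0$; since the algebra is closed under addition (hence contains each $g_{j'}$) and under multiplication, it contains $e_{j_0}$. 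Expanding this product and the complements yields $e_{j_0}$ as a fixed $\GF(2)$-combination of genuine attachment responses, so $[M_{j_0}]_2=\sum_j (e_{j_0})_j M_j$ is computable as the corresponding sum, modulo~$2$, of oracle answers on graphs obtained from~$G$ by attaching fixed gadgets.

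Finally I would verify the complexity claim. Because $H$ is fixed, the number of orbits~$m$, the separating gadgets, and the combination producing $e_{j_0}$ all depend only on~$H$ and~$j_0$, not on~$G$; each queried graph is~$G$ with $O(1)$ extra vertices, and only $O(1)$ oracle calls are made, so the procedure runs in polynomial (indeed linear) time. I expect the main obstacle to be exactly the point flagged above: a single product of gadgets cannot in general pin the root to one orbit, because Lemma~\ref{lemHxHy} controls only \emph{whether} two orbits are separated and not the \emph{direction} of separation, so one genuinely needs the summation (over oracle answers) step and the accompanying Stone--Weierstrass argument, underpinned by the careful checks that attachment is multiplicative and that responses are orbit-invariant.
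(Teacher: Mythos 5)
Your proposal is correct, and its overall architecture matches the paper's: an oracle call on $G$ with a gadget $\Gamma$ attached at the root yields the $\GF(2)$ measurement $\sum_j c_j(\Gamma)M_j$; identifying gadget roots multiplies coefficient vectors coordinate-wise (the paper's Lemma~\ref{lemGadgetStarGadget}); summing oracle answers gives closure under addition (the paper's formal sums and Lemma~\ref{lemImplementableVectorsClosed}); and the separating vectors and the all-ones vector come from Lemma~\ref{lemHxHy} and the single-vertex gadget, exactly as in Lemma~\ref{lemCanImplementOnesGadgets}. Where you genuinely depart from the paper is the combinatorial core, namely the proof that the orbit indicator lies in this algebra. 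The paper's Lemma~\ref{lemCanGetBasisVectors} proceeds by induction on the dimension with a three-case analysis, concluding that \emph{every} standard basis vector is in the closure. You instead construct the one vector actually needed directly, writing $e_{j_0}=\prod_{j'\neq j_0} g_{j'}$ where each $g_{j'}$ is a separating response or its complement $f_{j'}+\mathbf{1}$ (complementation being available precisely because the all-ones vector lies in the algebra and it is closed under addition) --- the standard Stone--Weierstrass/interpolation trick over $\GF(2)$. This is shorter, avoids the induction and case analysis entirely, and is arguably the cleaner route; the paper's version gives the formally stronger (but unneeded) conclusion about all basis vectors, which in any case also follows from your argument by varying $j_0$. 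Your final evaluation step and complexity accounting (gadgets and the expansion of the product depend only on the fixed graph $H$, so constantly many oracle calls on graphs of size $|G|+O(1)$ suffice) coincide with the paper's proof of Theorem~\ref{thmPinning}, and your closing remark --- that Lemma~\ref{lemHxHy} controls only whether, not in which direction, two orbits are separated, which is why summation of oracle answers is indispensable --- is exactly the obstruction the paper flags after stating the theorem.
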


Note that this result would follow immediately if were able to build a gadget (\ie, rooted graph) $(\Gamma,\gamma)$ such that $\homstar((\Gamma,\gamma),(H,x))$ is odd, while $\homstar((\Gamma,\gamma),(H,y))$ is even for all $y \neq x$. Then we could just attach a copy of $\Gamma$ at the vertex of $G$ that we want to colour with $x$, identifying this vertex with $\gamma$, and then count $H$-colourings of the new graph. 
Unfortunately, Lemma \ref{lemHxHy} doesn't allow us to construct such a gadget,
as it doesn't allow us to choose which colour is~$x$ and which is~$y$.
However, we can construct a series of gadgets which allow us to count colourings of~$G$ in which the root of~$G$ 
receives a colour from a given orbit of~$H$, 
by developing a sort of algebra on the gadgets, as described below.

\begin{defn}Suppose $H$ is a graph, and $h_1,\ldots,h_n$ is an enumeration of the vertices of~$H$.
With each gadget\/ $(\Gamma,\gamma)$ we associate a vector $v_H(\Gamma)\in\mathrm{GF}(2)^n$,
indexed by $\{1,\ldots,n\}$, such that the $i^\text{th}$ component of the vector is $1$ 
if there are an odd number of $H$-colourings 
of\/ $\Gamma$ which use colour $h_i$ at $\gamma$, and $0$ otherwise.
\end{defn}

Note that if two colours (vertices of $H$) $h_i$ and $h_j$ are in the same orbit of the automorphism group 
of~$H$ then the $i^\text{th}$ and $j^\text{th}$ entries of $v_H(G)$ are the same for all rooted graphs~$G$. 
So we may instead consider the vector $v_{H}^*(G)$ which is indexed by {\it orbits\/} of the automorphism group of~$H$ 
rather than individual vertices of $H$, the coordinate of $v_H^*(G)$ associated with a given orbit 
being the coordinate of $v_H(G)$ associated with any (and hence all) of the colours in that orbit.
Note that $v_H(G)$ and $v^*_H(G)$ contain exactly the same information.

We define an operation that combines two rooted graphs by identifying their root vertices.
\begin{defn}
Given two rooted graphs $\Gamma$ and $\Pi$, we define the the new rooted graph  $\Gamma \cdot \Pi$ to be the graph obtained by identifying the roots of each. 
The root of\/ $\Gamma \cdot \Pi$ is the vertex formed by identifying the roots of the other two graphs.
\end{defn}
If we think of each gadget $\Gamma$ and $\Pi$ as enforcing a certain set of allowed colours at its root vertex, 
which can think of this operation as forming a gadget that enforces the intersection of these sets.
This is equivalent to saying that vector associated with the new gadget is 
obtained by taking the coordinate-wise product of the vectors associated with the individual gadgets.

\begin{defn} \label{defnStar}
 We define the operation $*:\mathrm{GF}(2)^n \times\mathrm{GF}(2)^n \rightarrow \mathrm{GF}(2)^n$ to be the coordinate-wise product of two vectors, 
 so the $i^\text{th}$ coordinate of $v*w$ is the $i^\text{th}$ coordinate of $v$ multiplied by the $i^\text{th}$ coordinate of $w$.
\end{defn}

\begin{lem} \label{lemGadgetStarGadget}
Suppose $\Gamma$ and $\Pi$ are two rooted graphs, and $H$ is graph. 
Then $v_H(\Gamma \cdot \Pi) =v_{H}(\Gamma)*v_{H}(\Pi)$.
\end{lem}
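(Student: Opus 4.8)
The plan is to work coordinate by coordinate. Fix a vertex $h_i$ of $H$. By definition, the $i^\text{th}$ coordinate of $v_H(\Gamma\cdot\Pi)$ records the parity of the number of $H$-colourings of $\Gamma\cdot\Pi$ that assign $h_i$ to the root, while the $i^\text{th}$ coordinates of $v_H(\Gamma)$ and $v_H(\Pi)$ record the corresponding parities for $\Gamma$ and $\Pi$. So it suffices to show that, for each $i$, the parity of the number of $h_i$-rooted colourings of $\Gamma\cdot\Pi$ is the product (in $\GF(2)$) of the two individual parities.

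The key step is a factorisation of the count over the integers. Write $a_i$, $b_i$ and $c_i$ for the number of $H$-colourings of $\Gamma$, of $\Pi$, and of $\Gamma\cdot\Pi$ respectively in which the root receives colour $h_i$. I would establish the identity $c_i = a_i b_i$ by exhibiting a bijection. Recall that $\Gamma\cdot\Pi$ is obtained by identifying the roots of $\Gamma$ and $\Pi$, so its vertex set is the disjoint union of $V(\Gamma)$ and $V(\Pi)$ with the two roots glued, and every edge of $\Gamma\cdot\Pi$ lies entirely within the copy of $\Gamma$ or entirely within the copy of $\Pi$; there are no edges joining the interior of one gadget to the interior of the other. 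Restricting a homomorphism $\phi:\Gamma\cdot\Pi\to H$ with $\phi(\text{root})=h_i$ to each of the two parts yields homomorphisms $\phi|_\Gamma$ and $\phi|_\Pi$ that each send their root to $h_i$; conversely, any such pair agrees at the (identified) root and so glues to a well-defined homomorphism of $\Gamma\cdot\Pi$. This correspondence is a bijection between $h_i$-rooted colourings of $\Gamma\cdot\Pi$ and pairs of $h_i$-rooted colourings of $\Gamma$ and of $\Pi$, giving $c_i = a_i b_i$.

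Finally I would reduce modulo $2$. Since reduction modulo a prime is a ring homomorphism, $[c_i]_2 = [a_i b_i]_2 = [a_i]_2 \, [b_i]_2$, and the right-hand side is precisely the $i^\text{th}$ coordinate of the coordinate-wise product $v_H(\Gamma)*v_H(\Pi)$ of Definition~\ref{defnStar}. As $i$ was arbitrary, the two vectors agree in every coordinate, which is the claim.

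There is no deep obstacle in this argument; the only point that needs care is the bijection in the second paragraph, specifically the observation that fixing the colour of the root makes the colourings of the two gadgets independent. This independence fails if the root colour is left free---the counts would then combine by a convolution over the possible root colours rather than by a simple product---so it is essential that the factorisation is carried out separately for each fixed root colour $h_i$. This is exactly why the statement is phrased coordinate-wise and why the combining operation is the coordinate-wise product $*$ rather than an ordinary inner product.
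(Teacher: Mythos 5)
Your proof is correct and follows essentially the same route as the paper: fix a root colour $h_i$, observe that colourings of $\Gamma\cdot\Pi$ with the root coloured $h_i$ factor as pairs of such colourings of $\Gamma$ and $\Pi$ (so the counts multiply), and reduce \emph{modulo} 2. You spell out the restriction-and-gluing bijection that the paper leaves implicit, but the argument is the same.
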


\begin{proof}
Fix a colour $h_i\in V(H)$.  The number of colourings of $\Gamma\cdot\Pi$ with the root receiving colour~$h_i$
is just the product of the number of colourings $\Gamma$ and $\Pi$ with the roots in each case receiving colour~$h_i$.
Thus, if  there is a zero in the $i^\text{th}$ place of either of the vectors $v_H^*(\Gamma)$ or $v_H^*(\Pi)$, 
 then there is a zero in the $i^\text{th}$ place of $v_H^*(\Gamma\cdot\Pi)$;
 otherwise there is a one. 
\end{proof}

We now introduce a formal sum of rooted graphs, with coefficients in $\GF(2)$, 
which preserves addition of these vectors. Note that since this sum has coefficients in $\GF(2)$ we have $\Gamma + \Gamma = 0$. 

\begin{defn} \label{defnPlusGraphs}
For a set of rooted graphs\/ $\Gamma_1, \Gamma_2, \cdots,\Gamma_r$, we define 
$\vH(\Gamma_1 + \Gamma_2 + \cdots + \Gamma_r)$ to be $\vH(\Gamma_1) + \vH(\Gamma_2) + \cdots + \vH(\Gamma_r)$.
\end{defn}

\begin{defn}
We will say that a vector $v\in \mathrm{GF}(2)^n$ is {\em implementable} for some $n$-vertex $H$ if there is a set of rooted graphs $\{\Gamma_1,\Gamma_2, \ldots, \Gamma_r\}$ such that $v$ is equal to $\vH(\Gamma_1 + \Gamma_2 + \cdots +\Gamma_r)$.
\end{defn}

\begin{lem} \label{lemImplementableVectorsClosed}
The set of vectors that are implementable for a given $H$ is closed under the operations of vector addition 
and point-wise multiplication (or the operation~$*$, as defined in Definition \ref{defnStar}).
\end{lem}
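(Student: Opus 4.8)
The plan is to show closure under the two operations separately, each time exhibiting an explicit set of rooted graphs that implements the resulting vector. Recall that a vector $v\in\GF(2)^n$ is implementable if $v=\vH(\Gamma_1+\cdots+\Gamma_r)$ for some rooted graphs $\Gamma_i$, which by Definition~\ref{defnPlusGraphs} unwinds to $v=\sum_i\vH(\Gamma_i)$. So implementability is really a statement about the $\GF(2)$-span of the set $\{\vH(\Gamma):\Gamma\text{ a rooted graph}\}$ together with whatever $*$-products we can realise. I would first fix notation: let $v=\vH(\Gamma_1+\cdots+\Gamma_r)$ and $w=\vH(\Pi_1+\cdots+\Pi_s)$ be two implementable vectors.

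For closure under addition the argument is essentially bookkeeping. The formal sum was defined precisely so that $\vH$ is additive, so $v+w=\vH(\Gamma_1+\cdots+\Gamma_r+\Pi_1+\cdots+\Pi_s)$, which is implemented by the combined multiset of rooted graphs. The only subtlety to flag is the remark following Definition~\ref{defnPlusGraphs} that $\Gamma+\Gamma=0$ over $\GF(2)$: if the same rooted graph appears in both lists it cancels, but this only shrinks the implementing set and does not threaten implementability (the empty sum implements the zero vector). So I would state that $v+w$ is implementable and move on quickly.

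The main work, and the main obstacle, is closure under $*$. Here I would expand $v*w=\bigl(\sum_i\vH(\Gamma_i)\bigr)*\bigl(\sum_j\vH(\Pi_j)\bigr)$ and use bilinearity of the coordinate-wise product $*$ over $\GF(2)$ to distribute it as $\sum_{i,j}\vH(\Gamma_i)*\vH(\Pi_j)$. Now Lemma~\ref{lemGadgetStarGadget} converts each single term back into the vector of an actual rooted graph: $\vH(\Gamma_i)*\vH(\Pi_j)=\vH(\Gamma_i\cdot\Pi_j)$, where $\Gamma_i\cdot\Pi_j$ is the rooted graph formed by identifying the two roots. Hence
\begin{equation*}
v*w=\sum_{i,j}\vH(\Gamma_i\cdot\Pi_j)=\vH\Bigl(\sum_{i,j}\Gamma_i\cdot\Pi_j\Bigr),
\end{equation*}
which is implemented by the set of all $rs$ rooted graphs $\Gamma_i\cdot\Pi_j$. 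The care needed is to verify that $*$ genuinely distributes over $+$ in $\GF(2)^n$ — this is immediate coordinatewise, since ordinary multiplication distributes over addition in $\GF(2)$ — and that the final step is a legitimate application of Definition~\ref{defnPlusGraphs} read in the other direction. I expect no deeper difficulty: once Lemma~\ref{lemGadgetStarGadget} is available, closure under $*$ reduces to the distributive law, and the whole proof is a short algebraic manipulation assembling the additive structure of Definition~\ref{defnPlusGraphs} with the multiplicative structure of Lemma~\ref{lemGadgetStarGadget}.
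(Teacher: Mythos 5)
Your proposal is correct and follows essentially the same route as the paper's proof: closure under addition by concatenating the two formal sums, and closure under $*$ by distributing the coordinate-wise product over the sums and converting each term $\vH(\Gamma_i)*\vH(\Pi_j)$ into $\vH(\Gamma_i\cdot\Pi_j)$ via Lemma~\ref{lemGadgetStarGadget}. The extra remarks you flag (cancellation of repeated gadgets over $\GF(2)$ and the distributive law in $\GF(2)^n$) are fine but not needed beyond what the paper already does implicitly.
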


\begin{proof}
Suppose $v=\vH(\Gamma_1 + \Gamma_2 + \cdots + \Gamma_r)$ and $v'=\vH(\Pi_1 + \Pi_2 + \cdots + \Pi_s)$
are any two implementable vectors.  Then $v+v'$ is implementable, since
$$v+v'=\vH(\Gamma_1 + \Gamma_2 + \cdots + \Gamma_r+ \Pi_1 + \Pi_2 + \cdots + \Pi_s).$$
Furthermore,
\begin{align}
v*v'&= \vH(\Gamma_1 + \Gamma_2 + \cdots + \Gamma_r)*\vH(\Pi_1 + \Pi_2 + \cdots + \Pi_s)\notag\\
&= \big(\vH(\Gamma_1)+\vH(\Gamma_2)+\cdots+\vH(\Gamma_r)\big)*
    \big(\vH(\Pi_1) + \vH(\Pi_2) + \cdots + \vH(\Pi_s)\big)\notag\\
&= \vH(\Gamma_1)*\vH(\Pi_1) + \vH(\Gamma_1)*\vH(\Pi_2) + \cdots + \vH(\Gamma_r)*\vH(\Pi_s)\notag\\
&= \vH(\Gamma_1\cdot \Pi_1) + \vH(\Gamma_1 \cdot \Pi_2) + \cdots 
   + \vH(\Gamma_r\cdot \Pi_s)\label{eq:uselemGadgetStarGadget}\\
&= \vH(\Gamma_1 \cdot \Pi_1 + \cdots + \Gamma_r\cdot \Pi_s),\notag
\end{align}
where equality~(\ref{eq:uselemGadgetStarGadget}) follows from repeated application of Lemma~\ref{lemGadgetStarGadget}.
\end{proof}

\begin{lem} \label{lemCanImplementOnesGadgets}
 For any involution free graph, $H$, the all-ones vector is implementable, and for any pair of distinct orbits in $H$ there is at least one implementable vector which has a 1 at every vertex in one of the two orbits and a 0 at every vertex in the other orbit.
\end{lem}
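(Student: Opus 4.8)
The plan is to treat the two assertions separately: a single elementary gadget yields the all-ones vector, while the orbit-separating vectors come straight out of Lemma~\ref{lemHxHy} once it is translated into the language of the vectors~$\vH$. The only translation I need is the observation that, by the definitions, $\vH(\Gamma)$ has its $i$th coordinate equal to $[\homstar((\Gamma,\gamma),(H,h_i))]_2$, since $\homstar((\Gamma,\gamma),(H,h_i))$ counts exactly the $H$-colourings of $\Gamma$ that place colour $h_i$ at the root.

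For the all-ones vector, I would take $(\Gamma,\gamma)$ to be the rooted graph consisting of a single isolated vertex with no loop. Because $\gamma$ has no incident edges, every map sending $\gamma$ to a colour $h_i$ is automatically a homomorphism, so there is exactly one $H$-colouring of $\Gamma$ using $h_i$ at~$\gamma$, for each~$i$. Hence every coordinate of $\vH(\Gamma)$ is odd, so $\vH(\Gamma)$ is the all-ones vector, which is therefore implementable as a one-term formal sum. The one point to be careful about is the absence of a loop at~$\gamma$: a loop would restrict the admissible images to the looped colours of~$H$ and spoil the argument, which is why I insist on an isolated vertex.

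For the second assertion, fix two distinct orbits of $\Aut(H)$ and pick representatives $x$ and~$y$, one from each. Since $x$ and $y$ lie in different orbits, Lemma~\ref{lemHxHy} supplies a rooted graph $(\Gamma,\gamma)$ with $\homstar((\Gamma,\gamma),(H,x)) \not\equiv \homstar((\Gamma,\gamma),(H,y)) \pmod 2$; by the identification above this says precisely that the $x$-coordinate and the $y$-coordinate of $\vH(\Gamma)$ differ, so one is $1$ and the other is $0$. I would then invoke the fact (noted just after the definition of $\vH$) that $\vH(\Gamma)$ is constant on each orbit of $\Aut(H)$: every vertex in the orbit of~$x$ shares the value at~$x$, and likewise for~$y$. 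Consequently this implementable vector is identically $1$ on one of the two orbits and identically $0$ on the other, as required; note that its values on the remaining orbits are irrelevant to the statement.

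The apparent difficulty is the one flagged in the remark following Theorem~\ref{thmPinning}, namely that Lemma~\ref{lemHxHy} does not let us decide which of the two orbits ends up labelled~$1$. This is not an obstacle for the present statement, which is deliberately symmetric in the two orbits and asks only that they be separated; fixing the orientation is deferred to later, where the all-ones vector together with Lemma~\ref{lemImplementableVectorsClosed} permits complementation $v \mapsto \mathbf{1}+v$. Thus no genuinely hard step is needed here beyond correctly reading Lemma~\ref{lemHxHy} through the vectors~$\vH$ and using orbit-constancy.
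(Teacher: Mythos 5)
Your proof is correct and takes essentially the same approach as the paper: the paper likewise implements the all-ones vector with the one-vertex (loopless) rooted graph and obtains the orbit-separating vectors directly from Lemma~\ref{lemHxHy}. Your write-up simply makes explicit two steps the paper leaves implicit, namely the identification of the coordinates of $\vH(\Gamma)$ with $[\homstar((\Gamma,\gamma),(H,h_i))]_2$ and the appeal to orbit-constancy of $\vH$.
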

\begin{proof}
 The all-ones vector is implementable using the graph on one vertex. The rooted graphs whose vectors distinguish between distinct orbits of colours in $H$ are obtained using Lemma \ref{lemHxHy}.
\end{proof}

We now know that the set of implementable vectors is closed under the operations of coordinate-wise addition and coordinate-wise multiplication, that for any pair of colours which are not in the same orbit it contains a vector which has different entries at these two places, and that it contains the all-ones vector. In the following lemma, we prove that these facts are enough to enable us to count colourings in which a specific vertex is required to be coloured with colours from any given orbit of Aut($H$). 

\begin{lem} \label{lemCanGetBasisVectors}
Consider a set, $S$, of vectors in $GF(2)^n$ which contains the all-ones vector $(1,1,\ldots,1)$ and has the property that for any two indices $i$ and $j$ there is some vector in the set whose $i^\text{th}$ coordinate differs from its $j^\text{th}$ coordinate. The closure of this set under the operations of coordinate-wise multiplication and coordinate-wise addition includes each of the vectors in the standard basis.
\end{lem}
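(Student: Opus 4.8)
The plan is to recognise $\mathrm{GF}(2)^n$, equipped with coordinate-wise addition $+$ and coordinate-wise multiplication $*$, as a commutative ring---indeed as the ring of all $\mathrm{GF}(2)$-valued functions on the index set $\{1,\dots,n\}$, with the all-ones vector $\mathbf{1}$ playing the role of the multiplicative identity. The closure $A$ of $S$ under these two operations is then precisely the subring generated by $S$. Since $A$ contains $\mathbf{1}$ and, by hypothesis, separates every pair of indices, the statement is a finite and entirely elementary instance of the Stone--Weierstrass phenomenon: an algebra of functions that contains the constants and separates points must be the full function algebra, and in particular must contain every standard basis vector $e_i$ (the indicator of a single index). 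I would make this concrete by building each $e_i$ explicitly from the given data.

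Fix an index $i$. The first step is to produce, for every $j \neq i$, a vector $w_j \in A$ whose $i^{\text{th}}$ coordinate is $1$ and whose $j^{\text{th}}$ coordinate is $0$. The separation hypothesis supplies some $v \in S$ with $v_i \neq v_j$; if $v_i = 1$ I take $w_j = v$, and if $v_i = 0$ I take $w_j = \mathbf{1} + v$. The latter lies in $A$ because $\mathbf{1} \in S \subseteq A$, $v \in S \subseteq A$, and $A$ is closed under $+$; and since adding $\mathbf{1}$ complements every coordinate, in either case $(w_j)_i = 1$ and $(w_j)_j = 0$. This complementation trick is the one genuinely non-obvious move: it is exactly what compensates for the fact that the hypothesis does not tell us which of the two coordinates $i$ or $j$ carries the $1$.

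The second step is to form the coordinate-wise product $u = \prod_{j \neq i} w_j$, an $(n-1)$-fold $*$-product interpreted as $\mathbf{1}$ when $n = 1$, which lies in $A$ by closure under $*$. I then read off its coordinates: at position $i$ every factor contributes a $1$, so $u_i = 1$; at any position $k \neq i$ the single factor $w_k$ contributes $(w_k)_k = 0$, forcing the whole product to vanish there, so $u_k = 0$. Hence $u = e_i$. Since $i$ was arbitrary, every standard basis vector lies in $A$, which is the claim.

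I do not anticipate a serious obstacle: the argument is short, and the only care required is in the complementation step and in handling the degenerate case $n = 1$, where the empty product is the identity $\mathbf{1}$, which is itself $e_1$. The content of the lemma is conceptual rather than computational---it is the observation that \emph{separation together with the presence of the constants forces the full algebra}---and the product construction above is what makes that observation effective.
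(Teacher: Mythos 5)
Your proof is correct, and it takes a genuinely different route from the paper's. The paper proves the lemma by induction on $n$: the inductive hypothesis yields vectors agreeing with the standard basis on the first $n-1$ coordinates but with uncontrolled final coordinates $x_i$, and the proof then splits into three cases (all $x_i=0$; at least two $x_i=1$; exactly one $x_i=1$), in each case combining sums and products to recover the full basis. Your argument is direct and uniform: for a fixed $i$ you orient each separating vector by complementation ($w_j = v$ or $w_j = \mathbf{1}+v$, so that $(w_j)_i=1$ and $(w_j)_j=0$), then take the coordinate-wise product $\prod_{j\neq i} w_j$, which is exactly $e_i$. The complementation trick---using the all-ones vector additively to fix which side of the separation carries the $1$---is the step that lets you dispense with both the induction and the case analysis; the paper only exploits products of separators locally (in its Case 3) and never systematically adds $\mathbf{1}$. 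Your version is shorter, handles the degenerate case $n=1$ cleanly via the empty product, and the Stone--Weierstrass framing (constants plus point separation generate the full function algebra) makes the conceptual content explicit; the paper's induction buys nothing extra here, since for the application in Lemma \ref{lemCanImplementBasisVectors} the graph $H$ is fixed and the number of gadget terms is a constant either way.
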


\begin{proof}
We proceed by induction on~$n$. If $n=1$ the lemma clearly holds, as the all-ones vector is the only vector in the standard basis. Now, assume that $n>1$;  we shall attempt to construct the vectors in the standard basis in $\GF(2)^n$.

By induction, we can construct vectors that agree with the standard basis in the first $n-1$ places, without being able to control what happens in the $n^\text{th}$ place (note that the restriction of the set of vectors $S$ to the first $n-1$ places still satisfies the conditions of the lemma). That is, we can certainly obtain vectors of each of the following forms, where the $x_i$ can be either 0 or 1
\begin{displaymath}
\begin{array}{llllllll}
(1 & 1 & 1 & 1 & \ldots & 1 & 1 & 1) \\
(1 & 0 & 0 & 0 & \ldots & 0 & 0 & x_1) \\
(0 & 1 & 0 & 0 & \ldots & 0 & 0 & x_2) \\
(0 & 0 & 1 & 0 & \ldots & 0 & 0 & x_3) \\
\>\>\vdots&&&&&&&\,\vdots\\
(0 & 0 & 0 & 0 & \ldots & 0 & 1 & x_n) 
\end{array}
\end{displaymath}
This leaves several cases:

Case 1. The $x_i$ are all equal to zero. In this case, we already have the first $n-1$ vectors from the standard basis, and we can just take the sum of all $n-1$ vectors with the all-ones vector, which has a 1 in the last place and zeros everywhere else, to get the last one.

Case 2. There are at least two $i,j$ such that $x_i, x_j = 1$. But then the product of these two vectors is the vector $(0,0,\ldots,0,1)$. To obtain the remaining vectors from the standard basis, we just take the sum of this vector with any of those from the original list which had a 1 in the $n^\text{th}$ place, \ie, $e_i$ is the sum of this vector with the vector which had a 1 in the $i^\text{th}$ place and a 1 in the $n^\text{th}$ place.

Case 3. There is exactly one vector in the list, $v$ with a 1 as the $n^\text{th}$ coordinate. Say this vector has a 1 in the $i^\text{th}$ and $n^\text{th}$ places. By assumption, there is some vector in $S$ which has different values in the $n^\text{th}$ and $i^\text{th}$ places. The product of this with $v$ is a vector with exactly one 1, in either the $i^\text{th}$ or the $n^\text{th}$ place, and the sum of this basis vector with $v$ is the other of $e_i$ and $e_n$.
\end{proof}

\begin{lem} \label{lemCanImplementBasisVectors}
For any involution-free graph $H$, and any orbit of $O$ of $\Aut(H)$, the characteristic vector of~$O$
(which is 1 in coordinates indexed by~$O$ and 0 elsewhere) is implementable.
\end{lem}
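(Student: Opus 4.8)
The plan is to deduce this lemma almost immediately from the purely algebraic Lemma~\ref{lemCanGetBasisVectors}, once we pass from the vertex-indexed vectors $v_H$ to the orbit-indexed vectors $v_H^*$. The crucial point is that Lemma~\ref{lemCanGetBasisVectors} cannot be applied directly at the vertex level: by the remark following the definition of $v_H(\Gamma)$, any two vertices $h_i,h_j$ lying in the same orbit of $\Aut(H)$ have equal entries in $v_H(\Gamma)$ for every gadget $\Gamma$, so no implementable vector can separate them, and the separating hypothesis of Lemma~\ref{lemCanGetBasisVectors} fails. The resolution is to work at the level of orbits, where these collapsed coordinates cause no difficulty.

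Concretely, let $O_1,\ldots,O_m$ be the orbits of $\Aut(H)$ and regard each implementable vector through its orbit-indexed form $v_H^*\in\GF(2)^m$; recall that $v_H$ and $v_H^*$ carry exactly the same information. First I would verify that the set $S$ of orbit-level implementable vectors satisfies the two hypotheses of Lemma~\ref{lemCanGetBasisVectors}. The all-ones vector lies in $S$ by Lemma~\ref{lemCanImplementOnesGadgets}, and for any two distinct orbit indices $i\neq j$ the same lemma supplies an implementable vector taking value $1$ on every vertex of $O_i$ and $0$ on every vertex of $O_j$ (or vice versa); its orbit-level form therefore differs in coordinates $i$ and $j$. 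Thus $S$ contains the all-ones vector and separates every pair of indices, as required.

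Next I would invoke Lemma~\ref{lemImplementableVectorsClosed}: since the operations $+$ and $*$ act coordinate-wise, they descend unchanged to the orbit-indexed vectors, so the closure of $S$ under these operations again consists of implementable vectors. Lemma~\ref{lemCanGetBasisVectors} then guarantees that this closure contains every standard basis vector $e_i\in\GF(2)^m$. Unwinding the definitions, the orbit-level basis vector $e_i$ is precisely the vertex-level characteristic vector of the orbit $O_i$, namely the vector that is $1$ on every vertex of $O_i$ and $0$ on every vertex outside it. Hence each such characteristic vector is implementable, which is exactly the assertion of the lemma.

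There is no serious obstacle here beyond correctly identifying the index set: the entire content is the observation that Lemma~\ref{lemCanGetBasisVectors} must be applied with $n=m$, the number of orbits, rather than with $n=|V(H)|$. After that reindexing, the two input hypotheses are furnished directly by Lemma~\ref{lemCanImplementOnesGadgets}, and the passage from the closure back to implementability is guaranteed by Lemma~\ref{lemImplementableVectorsClosed}.
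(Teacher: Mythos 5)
Your proposal is correct and follows essentially the same route as the paper's own proof: pass to the orbit-indexed vectors $v_H^*$, supply the all-ones and orbit-separating vectors via Lemma~\ref{lemCanImplementOnesGadgets}, use closure from Lemma~\ref{lemImplementableVectorsClosed}, and apply Lemma~\ref{lemCanGetBasisVectors} at the orbit level to obtain the characteristic vectors. Your explicit remark on why the vertex-level application would fail (same-orbit coordinates can never be separated) is a helpful elaboration of a point the paper leaves implicit, but the argument is the same.
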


\begin{proof}
for the purposes of this proof, it is convenient to think in terms to the abbreviated vectors $\vH^*(G)$ in place of the 
full vectors $\vH(G)$.  (This is not an essential change;  we are merely eliminating duplicated coordinates.)  So, now,
an implementable vector is one the form $\vH^*(\Gamma_1+\cdots+\Gamma_r)$, for some rooted graphs $\Gamma_1,\ldots, \Gamma_r$.
By Lemma \ref{lemImplementableVectorsClosed} the set of vectors we can implement is closed under the operations of addition and coordinate-wise multiplication,
and by Lemma \ref{lemCanImplementOnesGadgets} we can implement the all ones vector and, for each pair of indices (orbits) $i$ and $j$ a vector 
$v$ with $v_i\not=v_j$. Thus, by Lemma \ref{lemCanGetBasisVectors}, every vector in the standard basis is implementable.  
\end{proof}

We are now ready to return to Theorem \ref{thmPinning}.  Let $v$ be the characteristic vector of the 
orbit~$O$.  We know that $v$ is implementable.  So we now just have to 
show that our definition of ``implementable'' actually does what we want it to do. 
That is, it is possible to determine, 
in polynomial time using an oracle for unrestricted $H$-colourings, 
the parity of the number of $H$-colourings of a rooted graph~$G$ in which the root receives a colour from~$O$.

\begin{proof}[Proof of Theorem \ref{thmPinning}]
Let $v\in\mathrm{GF}(2)^n$ be the characteristic vector of the orbit $O$.
By Lemma \ref{lemCanImplementBasisVectors}, the vector $v$ is implementable, \ie, 
$v = v_H(\Gamma_1 + \Gamma_2 + \cdots + \Gamma_r)$ for some set of rooted graphs $\{\Gamma_1, \ldots, \Gamma_r\}$. 
Thus,
\begin{align*}
v_H(G)*v&=v_H(G)*v_H(\Gamma_1 + \Gamma_2 + \cdots + \Gamma_r)\\
&=v_H(G)*v_H(\Gamma_1)+\cdots+v_H(G)*v_H(\Gamma_r)\\
&=v_H(G\cdot \Gamma_1)+\cdots+v_H(G\cdot\Gamma_r).
\end{align*}
Now take the sum of the coordinates of the vectors, {\it modulo\/} 2:
$$
\sum_{i=1}^n(v_H(G)*v)_i=\sum_{i=1}^nv_H(G\cdot \Gamma_1)_i+\cdots+\sum_{i=1}^nv_H(G\cdot \Gamma_r)_i.
$$ 
The left-hand side counts, {\it modulo\/} 2, $H$-colourings of $G$ in which vertex $x$ receives a colour
from~$O$;  this is exactly the quantity we are interested in computing.
The $j^\text{th}$ term on the right hand side, counts, {\it modulo\/}~2, the number of (unrestricted)
$H$-colourings of the graph $G\cdot \Gamma_j$.  So the right-hand side can be evaluated 
using $r$~calls to an oracle for \parityHcol.
\end{proof}
 
Finally, we need an analogue of Theorem \ref{thmPinning} which allows pinning of two vertices of~$G$.
(We thank the authors of \cite{GobelEtAl13} for pointing out a lacuna at this point in an earlier 
version of the proof.)
 
\begin{cor} \label{corPinning}
Suppose $G$ is a graph with distinguished vertices $x$ and $y$.  
Given an involution-free graph~$H$, orbits $O$ and $O'$ of the automorphism group of~$H$, 
and an oracle for \parityHcol, it is possible to determine, in polynomial time, 
the parity of the number of $H$-colourings of~$G$ in which $x$ (respectively $y$)
receives a colour from~$O$ (respectively $O'$).
\end{cor}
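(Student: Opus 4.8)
The plan is to lift the single-vertex pinning argument of Theorem~\ref{thmPinning} to a \emph{bilinear} version that tracks both roots simultaneously. The crucial point---and, I expect, the source of the gap noted above---is that one cannot simply invoke Theorem~\ref{thmPinning} twice: that result returns only a parity, discarding the joint information about which colours $x$ and $y$ receive, and it is exactly this joint information that is needed to pin two vertices at once. Instead I would record the full joint parity profile of the two roots. Fix the enumeration $h_1,\dots,h_n$ of $V(H)$ and let $M\in\GF(2)^{n\times n}$ be the matrix whose $(i,j)$ entry is the parity of the number of $H$-colourings of $G$ in which $x$ receives $h_i$ and $y$ receives $h_j$. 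Writing $v,w\in\GF(2)^n$ for the characteristic vectors of the orbits $O$ and $O'$, the quantity we want is exactly $v^{\mathsf T} M w=\sum_{i,j}v_i M_{ij}w_j$ over $\GF(2)$, since this sums $M_{ij}$ over precisely those pairs with $h_i\in O$ and $h_j\in O'$, giving the parity of the number of colourings with $x$ coloured from $O$ and $y$ from $O'$.

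The first step is to understand how attaching gadgets at the two distinct roots transforms this bilinear form. Given rooted graphs $(\Gamma,\gamma)$ and $(\Pi,\pi)$, let $G'$ be the graph obtained from $G$ by identifying $\gamma$ with $x$ and $\pi$ with $y$. Because $x\neq y$ and the gadgets are otherwise disjoint, a colouring of $G'$ factors as a colouring of $G$ together with independent extensions into $\Gamma$ and into $\Pi$ that agree with it at the two glue points. Hence, modulo~$2$, the number of colourings of $G'$ with $x=h_i$ and $y=h_j$ equals $M_{ij}\,\bigl(v_H(\Gamma)\bigr)_i\,\bigl(v_H(\Pi)\bigr)_j$ (the two-root analogue of Lemma~\ref{lemGadgetStarGadget}). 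Summing over all $i,j$, the total number of \emph{unrestricted} $H$-colourings of $G'$ is, modulo~$2$, exactly $\bigl(v_H(\Gamma)\bigr)^{\mathsf T} M\,v_H(\Pi)$, and this is obtainable with a single call to the \parityHcol{} oracle applied to $G'$.

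The second step assembles the answer by bilinearity. By Lemma~\ref{lemCanImplementBasisVectors} the characteristic vectors $v$ and $w$ are implementable, so (unfolding Definition~\ref{defnPlusGraphs}) we may write $v=\sum_{a=1}^{r}v_H(\Gamma_a)$ and $w=\sum_{b=1}^{s}v_H(\Pi_b)$ for suitable rooted graphs. Bilinearity of $v^{\mathsf T} M w$ over $\GF(2)$ then gives
\[
v^{\mathsf T} M w=\sum_{a=1}^{r}\sum_{b=1}^{s}\bigl(v_H(\Gamma_a)\bigr)^{\mathsf T} M\,v_H(\Pi_b),
\]
and, by the first step, each summand is the parity of the number of unrestricted $H$-colourings of the graph obtained from $G$ by attaching $\Gamma_a$ at $x$ and $\Pi_b$ at $y$. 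Since $H$, $O$ and $O'$ are fixed, the gadgets and the numbers $r,s$ are constants, so the right-hand side is a sum of $rs$ oracle outputs on graphs of size $|G|+O(1)$; it is therefore computable in time polynomial in $|G|$. Note that we never compute $M$ entrywise---only the single bilinear combination $v^{\mathsf T} M w$---which is fortunate, since the entries of $M$ encode precisely the restricted counts we set out to determine.

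The step I expect to require the most care is the factorisation in the second paragraph: verifying that a colouring of the doubly-augmented graph $G'$ splits cleanly into a colouring of $G$ and two \emph{independent} gadget extensions, so that its parity genuinely factors as $M_{ij}\bigl(v_H(\Gamma)\bigr)_i\bigl(v_H(\Pi)\bigr)_j$. This is exactly where it matters that the two gadgets are attached at \emph{distinct} vertices of $G$; once this is established, the remaining manipulation is routine linear algebra over $\GF(2)$.
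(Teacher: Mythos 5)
Your proposal is correct and follows essentially the same route as the paper's own proof: the paper likewise defines the matrix of joint colour parities, writes the characteristic vectors of $O$ and $O'$ as sums of gadget vectors via Lemma~\ref{lemCanImplementBasisVectors}, expands $u^{\intercal}Av$ by bilinearity over $\GF(2)$, and evaluates each term with one oracle call on $G$ with a gadget attached at each of $x$ and $y$. Your only addition is spelling out the factorisation of colourings of the doubly-augmented graph (the two-root analogue of Lemma~\ref{lemGadgetStarGadget}), which the paper asserts without detailed proof.
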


\begin{proof}
Define the matrix $A=(a_{ij})\in \mathrm{GF}(2)^{n\times n}$ as follows. For all $1\leq i,j\leq n$,
$$
a_{ij}=\big[\text{number of colourings of $G$ with $x$ receiving colour $i$ and $y$ colour $j$}\big]_2.
$$
Let $u$ and $v$ be the  characteristic vectors of $O$ and $O'$.  By Lemma~\ref{lemCanImplementBasisVectors}
we know that $u$ and $v$ are implementable, \ie, 
$u=v_H(\Gamma_1)+\cdots+v_H(\Gamma_r)$ and $v=v_H(\Gamma'_1)+\cdots+v_H(\Gamma'_{s})$ for
some rooted graphs $\Gamma_1,\ldots, \Gamma_r$ and $\Gamma'_1,\ldots,\Gamma'_{s}$.  Thus
\begin{align*}
u^\intercal Av
&\equiv\big(v_H(\Gamma_1)+\cdots+v_H(\Gamma_r)\big)^\intercal A\big(v_H(\Gamma'_1)+\cdots+v_H(\Gamma'_{s})\big)\\
&\equiv \sum_{i,j=1}^n v_H(\Gamma_i)^\intercal A\,v_H(\Gamma'_j) \pmod2.
\end{align*}
Note that the left hand side is the quantity we are interested in, namely the number of restricted 
$H$-colourings of~$G$. Finally note that the $(i,j)^\mathrm{th}$ term in the last sum is equal, {\em modulo\/}~2, to the number 
of colourings of~$G$ with $\Gamma_i$ attached to~$x$ and $\Gamma'_j$ to~$y$.  So each term on the 
right hand side may be computed using an oracle for \parityHcol.
\end{proof}

\section{Trees} \label{secTrees}

As we have seen, if we apply the reduction operations defined in Definition \ref{defReduction} to any graph~$H$, this preserves the parity of the number of $H$-colourings of any graph~$G$. In particular, if a given $H$ reduces to a graph, say $H'$ such that the $H'$-colouring problem lies in~P, then the $H$-colouring problem also lies in~P\null. There are certain involution-free graphs~$H$ for which the $H$-colouring problem obviously lies in~P.

\begin{lem} \label{lemEasyTrees}
 Counting the number of $H$-colourings of a given graph $G$ can be done in polynomial time if $H$ is one of the null graph (the graph on no vertices), the graph on one vertex with no loop, the graph on one vertex with a loop, or the graph on two disconnected vertices, one with a loop and one without.
\end{lem}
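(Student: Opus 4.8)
The plan is to treat each of the four target graphs~$H$ separately and, in each case, to exhibit an explicit closed-form expression for $\hom(G,H)$ that is manifestly computable in polynomial time (indeed, essentially in linear time). Since three of the four cases admit a one-line count, the only case carrying any real content is the last.

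First I would dispose of the three degenerate targets. If $H$ is the null graph then there is no colour to which any vertex of~$G$ can be mapped, so $\hom(G,H)=1$ when $G$ itself is the null graph and $\hom(G,H)=0$ otherwise. If $H$ is a single vertex~$a$ with no loop, then every homomorphism must send all of $V(G)$ to~$a$, and this map is edge-preserving exactly when $G$ has no edges (since $(a,a)\notin E(H)$); hence $\hom(G,H)=1$ if $G$ is edgeless and $0$ otherwise. If $H$ is a single looped vertex~$b$, then the constant map to~$b$ is always a homomorphism and is the only one, so $\hom(G,H)=1$ for every~$G$. Each of the conditions involved (whether $G$ is null, whether $G$ has an edge) is checkable in polynomial time.

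The substantive case is $H$ with two disconnected vertices, a looped vertex~$b$ and an unlooped vertex~$a$, with no edge between them. Here I would argue that the only edge of~$H$ is the loop $\{b,b\}$, so for any edge $\{u,v\}\in E(G)$ (including a loop at a single vertex) a homomorphism~$\phi$ is forced to satisfy $\phi(u)=\phi(v)=b$. Consequently every vertex of~$G$ incident to at least one edge must be coloured~$b$, while a vertex incident to no edge at all may be coloured either~$a$ or~$b$, independently of the others. Letting $k$ denote the number of such totally isolated vertices of~$G$, this yields the clean formula $\hom(G,H)=2^{k}$. Counting $k$ takes linear time, and $2^{k}$ has at most $n+1$ bits, so the value is produced in polynomial time.

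There is no serious obstacle in this lemma; the only point requiring a little care is the bookkeeping in the final case, and in particular remembering that a loop at a vertex of~$G$ counts as an incident edge and so forces that vertex to~$b$, so that the relevant ``isolated'' vertices are precisely those with no incident edge of any kind. (Modulo~$2$ the last formula collapses further: $\hom(G,H)$ is odd iff $k=0$, \ie, iff $G$ has no isolated vertex, which is all that is needed for the parity application.)
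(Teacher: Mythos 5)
Your proof is correct and takes essentially the same route as the paper's: a case-by-case analysis of the four targets, ending with the formula $2^{k}$ for the number of isolated vertices $k$ (the paper writes $2^{|\mathrm{Isol}(G)|}$) in the two-vertex case. Your explicit observation that a loop in $G$ counts as an incident edge and forces that vertex to the looped colour is, if anything, slightly more careful than the paper's phrasing, which refers only to vertices in connected components of size greater than one.
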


\begin{proof}
 If $H$ is the null graph then there is no $H$-colouring of $G$, so the counting problem is obviously trivial. If $H$ is the graph on one vertex then $G$ has exactly one $H$-colouring if and only if $G$ has no edges, and zero otherwise, which can be determined in polynomial time. If $H$ is the graph on one vertex with a loop, then there is exactly one $H$-colouring of~$G$. If $H$ is the graph on two vertices one with a loop and one without then there are exactly $2^{|\mathrm{Isol}(G)|}$ colourings of~$G$, where $\mathrm{Isol}(G)$ is the set of isolated vertices of~$G$. Each isolated vertex can be coloured with either the looped vertex or the unlooped vertex of~$H$ independently, and all the vertices which form part of a connected component of size greater than one must be coloured with the looped vertex.
\end{proof}

\begin{cor} \label{corEasyGraphs}
If the reduced form associated with a given $H$ in the reduction system defined in Definition \ref{defReduction} is one of the null graph, the graph on one vertex, the graph on one vertex with a loop or the graph on two vertices, one with a loop and one without, then $H$-colouring is in~P.
\end{cor}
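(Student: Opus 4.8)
The plan is to combine the two preceding results, exploiting the fact that the target graph~$H$ is a fixed constant of the problem \parityHcol, so that its reduced form is fixed as well. First I would note that, since $H$ does not depend on the instance~$G$, the reduction sequence witnessing that the reduced form $H'$ is associated with~$H$ is also fixed: it may be computed once and regarded as part of the problem specification, rather than as something to be derived from the input. Thus we may treat $H'$ as a known graph, which by hypothesis is one of the four listed in Lemma~\ref{lemEasyTrees}.

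Next I would invoke the parity-preservation property of the reduction. Writing the sequence as $H = H_0 \rightarrow_2 H_1 \rightarrow_2 \cdots \rightarrow_2 H_\ell = H'$, each single step $H_i \rightarrow_2 H_{i+1}$ arises, by Definition~\ref{defReduction}, from an involution $\sigma$ of $H_i$ with $H_i^\sigma = H_{i+1}$. Hence Lemma~\ref{lemInvolutions} gives $\hom(G,H_i) \equiv \hom(G,H_{i+1}) \pmod 2$ for every graph~$G$, and chaining these congruences along the whole sequence yields $\hom(G,H) \equiv \hom(G,H') \pmod 2$ for all~$G$; equivalently, $[\hom(G,H)]_2 = [\hom(G,H')]_2$.

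Finally I would appeal to Lemma~\ref{lemEasyTrees}: for each of the four candidate reduced forms~$H'$, the exact count $\hom(G,H')$ is computable in polynomial time, and therefore so is its residue modulo~$2$. Combining this with the congruence of the previous paragraph, the value $[\hom(G,H)]_2$ can be computed in polynomial time by the simple algorithm ``evaluate $\hom(G,H')$ exactly and reduce modulo~$2$''. This shows \parityHcol{} lies in~P, as required.

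I do not expect a genuine obstacle here, since the corollary is essentially an assembly of Lemma~\ref{lemInvolutions} and Lemma~\ref{lemEasyTrees}. The only point needing a little care is the conceptual one raised in the first paragraph: the reduction by involutions preserves only the \emph{parity} of $\hom(G,H)$ and not the exact count, so the conclusion is about \parityHcol{} rather than exact counting; and the reduced form~$H'$ is a fixed constant of the problem, so all of the computational work is confined to the exact-counting routine supplied by Lemma~\ref{lemEasyTrees}.
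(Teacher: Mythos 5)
Your proof is correct and takes essentially the same route as the paper: the paper's own one-line proof likewise just combines Lemma~\ref{lemEasyTrees} with the parity-preservation of the reduction system given by Lemma~\ref{lemInvolutions}. If anything, your write-up is more careful than the paper's, which loosely says the reduction ``preserves the number of $H$-colourings'' when, as you correctly note, only the parity is preserved---and you also make explicit the chaining along the reduction sequence and the fact that the reduced form is a fixed constant of the problem.
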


\begin{proof}
 This follows directly from Lemma \ref{lemEasyTrees} and the fact that the reduction system preserves the number of $H$-colourings, as shown in Lemma \ref{lemInvolutions}
\end{proof}

We conjecture that for general graphs, the reduction given in Corollary \ref{corEasyGraphs}, that is, $H$ reducing by involutions to one of the four trivial graphs is the only way in which the \parityHcol{} problem can fail to be $\parityP$-complete. Note that this criterion does encompass all of the easy cases identified by Dyer and Greenhill \cite{DyerGreenhill2000}. A complete graph with loops everywhere reduces to the null graph if it has an even number of vertices and the graph on one vertex with a loop if it has an odd number. On the other hand, a complete bipartite graph reduces to the graph on one vertex if there are an odd number of vertices in total, and the null graph otherwise.

In this section, we will prove that this conjecture is true for trees. In particular, if the reduced form, in the reduction system of Definition~\ref{defReduction}, associated with a given tree~$T$ is the graph on one vertex or the null graph, then the associated \parityTcol{} problem can be solved in polynomial time. Otherwise, it is $\parityP$-complete. Note that G\"obel, Goldberg and Richerby~\cite{GobelEtAl13} have recently extended the known range of validity of the conjecture from trees to cactus graphs. 

\subsection{Involution-Free Trees} \label{secInvFreeTrees}

Involution-free trees have quite a lot of structure, and we will exploit this when we build gadgets for our reductions from \parityIndSet{} (defined below) to \parityHcol{} in the next section.

\begin{lem} \label{lemVertsDeg2}
An involution-free tree on more than one vertex has two vertices of degree 2 which are adjacent to leaves.
\end{lem}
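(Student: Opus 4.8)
The plan is to use a longest-path (diameter) argument. First I would record the one structural consequence of involution-freeness that drives everything: no vertex of the tree~$T$ can have two leaf neighbours. Indeed, if a vertex~$v$ had two leaf neighbours $u$ and $u'$, then the permutation swapping $u$ and $u'$ and fixing every other vertex is edge-preserving (the only edges it touches are $\{u,v\}$ and $\{u',v\}$, which are interchanged), hence an automorphism of order~$2$ --- contradicting involution-freeness.

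Next I would take a longest path $P = v_0 v_1 \cdots v_k$ in~$T$; its endpoints $v_0$ and $v_k$ are necessarily leaves, since otherwise $P$ could be extended. The key step is to show that $v_1$ has degree exactly~$2$. It is adjacent to the leaf $v_0$ and to $v_2$, and I claim it has no further neighbour. Suppose it had a neighbour $x \notin \{v_0, v_2\}$. Let $z$ be a vertex at maximum distance from $v_1$ in the component of $T - v_1$ containing $x$. The unique $z$-to-$v_k$ path passes through $v_1$ and has length $d(z, v_1) + (k-1)$, which cannot exceed $k$ by maximality of~$P$; hence $d(z, v_1) \le 1$, which forces $z = x$ and makes $x$ a leaf. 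But then $v_0$ and $x$ are two leaf neighbours of $v_1$, contradicting the structural remark. So $v_1$ has degree~$2$ and is adjacent to the leaf $v_0$; by the symmetric argument at the other end, $v_{k-1}$ has degree~$2$ and is adjacent to the leaf $v_k$.

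Finally I would verify that these two vertices are distinct, i.e.\ that $k \ge 3$. Since $T$ has more than one vertex, $k \ge 1$. If $k = 1$ then $T$ is a single edge, which admits the transposition of its two endpoints as an involution; if $k = 2$ then, by the degree argument above, $v_1$ has the two leaves $v_0$ and $v_2$ as its only neighbours, and swapping them is again an involution. Both cases are excluded by involution-freeness, so $k \ge 3$, whence $1 \ne k-1$ and $v_1 \ne v_{k-1}$. This exhibits the required two distinct degree-$2$ vertices adjacent to leaves.

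The main obstacle is the degree-$2$ claim for $v_1$: it requires combining the extremality of the longest path (to bound how far any stray branch at $v_1$ could reach) with the no-two-leaves consequence of involution-freeness (to eliminate the one remaining short branch). The distinctness check is then routine, since the small-diameter cases $k \in \{1,2\}$ visibly contain involutions.
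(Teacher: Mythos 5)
Your proof is correct and takes essentially the same route as the paper's: pick a longest path, argue that its second and penultimate vertices have degree~2 because any extra branch would produce two leaf neighbours that can be swapped by an involution, and rule out the short cases (path length $1$ or $2$) by exhibiting explicit involutions. The only differences are cosmetic --- you establish $k \geq 3$ at the end rather than the start, and you trim the stray branch via a farthest-vertex argument where the paper argues directly that the extra neighbour of $v_1$ must itself be a leaf.
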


\begin{proof}
The argument given below is very similar to the standard argument given to show that any tree has at least two leaves.

The first observation to make is that any involution-free tree contains some path of length at least~3. If the maximum-length path in a tree is of length~1, then the tree consists of a single edge, and so has an involution. If it is of length~2, then the tree is a  star, and exchanging any two of its leaves is an involution.

Consider a longest path in an involution-free tree, and label the vertices of this path $p_0,p_1, \ldots, p_\ell$. Note that $p_0$ and $p_\ell$ are both leaves. Then we claim that both vertices $p_1$ and $p_{\ell-1}$ are degree 2. Note that $p_1$ and $p_{\ell-1}$ are in fact distinct vertices, as $\ell\geq 3$. Assume the degree of $p_1$ is greater than 2, and consider a vertex, $v$, adjacent to $p_1$ which is neither $p_0$ nor $p_2$. This vertex cannot have any neighbours which are not already in the path (as this would contradict maximality of the path). It also cannot have any neighbours which are in the path (as this would create a cycle, contradicting the fact that $G$ is a tree). Therefore, it cannot have any neighbours other than $p_1$. But  then exchanging this vertex with $p_0$ is an involution of $G$, so there is no such vertex, and $p_1$ is degree 2 as claimed. An analogous argument shows that $p_{\ell-1}$ must be degree 2.
\end{proof}

We will also require the following lemma.

\begin{lem} \label{lemInvFreeMeansAsymmetric}
An involution-free tree has trivial automorphism group.
\end{lem}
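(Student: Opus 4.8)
The plan is to translate the problem into a statement about the order of the automorphism group and then exploit the recursive structure of tree automorphisms. First I would invoke Lemma~\ref{lemInvolutionIffEvenOrder} with $p=2$: a graph has an involution if and only if the order of its automorphism group is even. Thus ``$T$ is involution-free'' is equivalent to ``$|\Aut(T)|$ is odd'', and the lemma reduces to the purely group-theoretic claim that any tree~$T$ with $|\Aut(T)|$ odd satisfies $\Aut(T)=\{\mathrm{id}\}$. Equivalently, I would prove the contrapositive flavour: any nontrivial symmetry of a tree forces an element of order~$2$. The organising device is the classical fact that every finite tree has a \emph{center} that is either a single vertex or a single edge, and that the center is an isomorphism invariant; hence every automorphism of~$T$ either fixes the central vertex, or permutes the two endpoints of the central edge (fixing them both or swapping them).

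The heart of the argument is a rooted induction. For a rooted tree $(T,r)$, I would consider the group $\Aut(T,r)$ of automorphisms fixing~$r$, and prove by induction on $|V(T)|$ that if $|\Aut(T,r)|$ is odd then $\Aut(T,r)$ is trivial. In the inductive step, let $c_1,\dots,c_m$ be the children of~$r$, with rooted subtrees $T_1,\dots,T_m$. Partitioning these subtrees into rooted-isomorphism classes, with $k_j$ subtrees of type~$j$, the automorphisms fixing~$r$ decompose as the product $\prod_j\big(\Aut(\text{type }j)\wr S_{k_j}\big)$, whose order is $\prod_j |\Aut(\text{type }j)|^{k_j}\,k_j!$. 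For this to be odd we need every $k_j!$ odd, forcing $k_j\le 1$ (so all subtrees of~$r$ are pairwise non-isomorphic as rooted trees and are individually fixed), and we need each $|\Aut(\text{type }j)|$ odd, so each subtree's automorphism group is trivial by the induction hypothesis. Hence $\Aut(T,r)$ is trivial. The point that makes this work is that two isomorphic subtrees hanging off the same vertex always yield a transposition, which is an involution; odd order rules this out.

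To finish, I would transfer the rooted result to the unrooted tree through the center. If the center is a vertex~$c$, then $\Aut(T)=\Aut(T,c)$ and the rooted induction applies directly. If the center is an edge $\{u,v\}$, let $T_u$ and $T_v$ be the two rooted subtrees obtained by deleting that edge. Should $T_u$ and $T_v$ be isomorphic as rooted trees, the map swapping them via this isomorphism is an involution, so $|\Aut(T)|$ is even; thus when $|\Aut(T)|$ is odd we must have $T_u\not\cong T_v$, no automorphism can swap $u$ and~$v$, and every automorphism fixes both, giving $\Aut(T)\cong\Aut(T_u,u)\times\Aut(T_v,v)$, each factor of odd order and hence trivial. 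In every case $\Aut(T)$ is trivial, as required. I expect the main obstacle to be the edge-center case: one has to argue carefully that an odd-order automorphism group cannot swap the central edge, precisely because such a swap (when the two sides match) produces an involution, and then reduce to two rooted subtrees. The rest is a clean induction once the wreath-product decomposition of a rooted tree's automorphism group is set up.
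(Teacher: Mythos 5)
Your proof is correct, but it takes a genuinely different route from the paper. The paper's proof is three lines long: it cites the classical characterisation of P\'olya (after Jordan) that the automorphism group of a tree is built from symmetric groups by direct products and wreath products with symmetric groups, observes that any such group has order either $1$ or even, and then extracts an involution from even order (via Cauchy's theorem, which the paper invokes under the name of Lagrange --- a minor slip). You instead make the argument self-contained: you reduce to the statement ``odd order implies trivial'' via Lemma~\ref{lemInvolutionIffEvenOrder}, and then effectively \emph{reprove} the special case of the P\'olya--Jordan structure theorem that is needed, using the center of the tree to pass to the rooted setting and a rooted induction in which $\Aut(T,r)$ decomposes as $\prod_j\bigl(\Aut(\text{type } j)\wr S_{k_j}\bigr)$ of order $\prod_j|\Aut(\text{type } j)|^{k_j}k_j!$; oddness kills every $k_j\ge 2$ and, by induction, every factor. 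Your edge-center case (a swap of two isomorphic halves is an involution; otherwise the group splits as a direct product of the two rooted groups) is handled correctly, including the point that any automorphism preserves the central edge and hence fixes or swaps its endpoints. What each approach buys: the paper's citation-based proof is short but rests on an external result the reader must look up; yours is longer but elementary and exposes the exact mechanism --- isomorphic sibling subtrees always produce a transposition, so an involution-free tree can have no repeated subtree anywhere, whence rigidity. Both proofs pivot on the same algebraic fact that wreath and direct products of symmetric groups have even order unless all factors are trivial.
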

This follows directly from a characterisation of P\'olya \cite{Pol37} after Jordan \cite{Jor1869} of the automorphism groups of trees.

\begin{proof}
The automorphism group of a tree can be formed from symmetric groups using the operations of direct product and wreath product with a symmetric group~\cite{Pol37}.
Since the symmetric groups $S_n$ for $n>1$ have even order, the automorphism group of a tree is either of even order or has order 1. If it has even order, then by Lagrange's Theorem, the tree has an involution. So an involution-free tree has trivial automorphism group.
\end{proof}

Finally, we require the following technical lemma concerning the number of walks of various lengths between vertices in involution-free trees.

\begin{lem} \label{lemPathsOfLengthk}
Let $H$ be an involution-free tree, let $e_0$ be a vertex of degree 2 which is adjacent to a leaf in $H$, and let $e_\ell$ be a vertex of even degree such that there are no vertices of even degree on the path joining $e_0$ and $e_\ell$, where $\ell \geq 1$ is the length of the path joining $e_0$ and $e_\ell$. We will name the vertices on this path $e_0,o_1,o_2,\ldots, o_{\ell-1}, e_\ell$. 

Then there are an even number of vertices $v$ such that both:  
\begin{enumerate}
 \item $v$ is a neighbour of the first vertex on this path other than $e_0$, \ie,  
 $v$ is a neighbour of $e_1$ in the case $\ell=1$, and a neighbour of $o_1$ otherwise;  and 
 \item the number of walks of length $\ell$ from $v$ to $e_\ell$ in $H$ is odd.
\end{enumerate}
\end{lem}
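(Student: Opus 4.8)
The plan is to translate the two conditions into a single entry of a power of the adjacency matrix of $H$, reduced modulo~$2$. Write $A$ for the adjacency matrix of~$H$, so that $(A^k)_{u,w}$ is the number of walks of length~$k$ from~$u$ to~$w$. The lemma asks for the parity of the \emph{number} of neighbours $v$ of $o_1$ (with $o_1=e_1$ read as $e_\ell$ when $\ell=1$) for which $(A^\ell)_{v,e_\ell}$ is odd. This parity is
\[
\sum_{v:\,v\sim o_1}\big[(A^\ell)_{v,e_\ell}\big]_2
\equiv\sum_{v}A_{o_1,v}\,(A^\ell)_{v,e_\ell}
=(A^{\ell+1})_{o_1,e_\ell}\pmod 2 .
\]
So it suffices to show that the number of walks of length $\ell+1$ from $o_1$ to $e_\ell$ in~$H$ is even. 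Note that this sum ranges over \emph{all} neighbours of $o_1$, including~$e_0$; it is precisely the inclusion of $e_0$ that makes the parity come out correctly, so I would flag this interpretation explicitly at the outset.

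Next I would exploit that $H$ is a tree. The path named in the hypothesis shows $o_1$ and $e_\ell$ lie at distance $\ell-1$, so a walk of length $\ell+1$ between them has exactly two steps in excess of the geodesic. The key sub-claim is a counting formula valid in any tree: if $u,w$ are at distance $d$ with connecting path $p_0=u,p_1,\dots,p_d=w$, then the number of walks of length $d+2$ from $u$ to $w$ equals $\sum_{i=0}^{d}\deg(p_i)-d$. I would prove this by classifying each such walk by its single length-two excursion. By the standard edge-parity fact in a tree (an edge is crossed an odd number of times iff it separates $u$ from $w$), the total of $d+2$ crossings forces exactly one of two situations: either one path edge is traversed three times, which determines the walk uniquely and so contributes $d$ walks; or one edge incident to a path vertex but off the path is traversed out and back, contributing one walk for each such edge, of which there are $\sum_{i=0}^{d}\deg(p_i)-2d$. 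Summing gives $d+\big(\sum_i\deg(p_i)-2d\big)=\sum_i\deg(p_i)-d$, and the degenerate case $d=0$ recovers the $\deg(u)$ closed walks of length~$2$.

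Finally I would substitute the degree parities. Applying the formula with $u=o_1$, $w=e_\ell$, $d=\ell-1$, the path vertices are $o_1,\dots,o_{\ell-1},e_\ell$: the $\ell-1$ vertices $o_i$ have odd degree (no even-degree vertex lies strictly between $e_0$ and $e_\ell$), while $e_\ell$ has even degree. Hence $\sum\deg(p_i)\equiv(\ell-1)\cdot 1+0\equiv\ell-1$, and the walk count is $\equiv(\ell-1)-(\ell-1)\equiv 0\pmod 2$, which is exactly the assertion; the case $\ell=1$ degenerates to the statement that $\deg(e_\ell)$ is even. I expect the main obstacle to be establishing the walk-counting formula cleanly: one must justify that every length-$(d+2)$ walk decomposes as the geodesic plus exactly one of the two excursion types, and that the two types are enumerated without overlap, so that the final parity bookkeeping is unambiguous. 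Everything after that formula is a short mod-$2$ degree count.
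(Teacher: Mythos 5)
Your proof is correct, but it is organised around a genuinely different decomposition from the paper's. The paper analyses walks of length $\ell$ starting from each neighbour of $o_1$ separately: every neighbour other than $o_2$ admits exactly one such walk (the geodesic), while the walks from $o_2$, whose length exceeds the distance $\ell-2$ by two, are put in bijection with the edges incident to the vertices of the path from $o_2$ to $e_\ell$; that edge count is then reduced modulo~2 by the same degree-sum bookkeeping you use, the conclusion implicitly requires that $\deg(o_1)$ is odd, and the cases $\ell=1$ and $\ell=2$ are handled by separate ad hoc arguments at the end. Your opening move---rewriting the desired parity as the single entry $(A^{\ell+1})_{o_1,e_\ell}$ modulo~2---does not appear in the paper, and it is what buys the uniformity: it absorbs the per-neighbour case distinction (including the contribution of $e_0$, which you rightly flag), it makes the oddness of $\deg(o_1)$ enter only through the degree sum, and it lets $\ell=1$ and $\ell=2$ fall out of the same formula (as $d=0,1$) instead of needing separate treatment. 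The combinatorial core is shared: your formula $\sum_{i}\deg(p_i)-d$ for the number of walks of length $d+2$ between vertices at distance $d$ in a tree is exactly the paper's count of ``edges incident to the path,'' the paper having merged your two excursion types (a path edge crossed three times; an off-path edge crossed out-and-back) into a single bijection between walks and incident edges. Your version is, if anything, more carefully justified than the paper's, which asserts the corresponding bijection without proof. The one detail worth making explicit when writing it up is the point you yourself identify: in the out-and-back case, the doubled edge must be incident to a path vertex, because the set of traversed edges forms a connected subgraph of the tree containing both endpoints of the walk.
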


\begin{proof}
We will refer in this proof to the vertices $o_1$ and $o_2$, which do not exist if $\ell=1$ or $\ell=2$, we deal with this at the end of this proof. For now, assume $\ell\geq 3$. We want to prove that there are an even number of neighbours of $o_1$ from which there are an odd number of walks of length $\ell$ to $e_\ell$ in $H$. There are an odd number of paths of length $\ell$ from $e_\ell$ to each of the neighbours of $o_1$ other than $o_2$: there is, in fact, one such walk, and it is the unique path connecting the neighbour to $e_\ell$ in the tree. We claim that there are an even number of walks of length $\ell$ from $e_\ell$ to $o_2$. 

A walk of length $\ell$ from $e_\ell$ to $o_2$ traverses exactly 1 edge more than once, as there is a unique path of length $\ell-2$ from $e_\ell$ to $o_2$. Two such walks which traverse the same edge more than once are identical. There is therefore a one-to-one correspondence between these walks and the edges which are traversed at least twice by at least one of them. We claim that the number of such edges is even.

Any edge which is adjacent to any of the vertices in $\{o_2, o_3 \ldots e_\ell\}$, and only those edges, may be traversed more than once, so it suffices to show that there are an even number of such edges. To see this, note that the only edges in this set which are adjacent to more than one of the vertices in the set are: $\{(o_2,o_3),(o_3,o_4), \ldots, (o_{\ell-1},e_\ell)\}$, there are the same number of edges in this set as the number of vertices of odd degree in $\{o_2, \ldots, e_\ell\}$. The total number of edges is then just the sum of the vertex degrees minus the number of edges which are adjacent to more than one of the vertices; but the sum of the vertex degrees is $\ell-2 \pmod 2$ (as there are $\ell-2$ vertices of odd degree) and the number of repeated edges is $\ell-2$, so the parity of the total number of edges is $(\ell-2) - (\ell-2) \equiv 0 \pmod 2$.

As noted above, if $\ell = 1$  or if $\ell=2$ the vertices $o_1$ or $o_2$ may not exist. However, the theorem still holds.

In particular, if $\ell=1$ then we actually have two adjacent vertices of even degree and the first vertex on the path which is not $e_0$ is in fact $e_1$, which is of even degree. Clearly there are an even number of vertices adjacent to $e_1$ with an odd number of length 1 walks to $e_1$, these being exactly the neighbours of $e_1$.

If $\ell = 2$, then again the vertex whose neighbours we are interested in is of odd degree, call it $o_1$, and there are an odd number of walks of length $2$ from $e_2$ to each of the neighbours of $o_1$ other than itself: in fact, there is exactly one such walk, the path joining the two vertices. On the other hand, $e_2$ is of even degree, so there are an even number of walks of length 2 from $e_2$ to itself. Since $o_1$ has an odd number of neighbours, this leaves an even number of neighbours of $o_1$ which have an odd number of length 2 walks to $e_2$, as claimed.
\end{proof}

\subsection{The Reduction} \label{secReduction}
Our starting point is the following problem, which was shown by Valiant~\cite{Valiant06} (in the guise of ``Mon 2-CNF'')
to be $\parityP$-complete; see also Faben~\cite[Thm.~3.5]{Faben2008}).

\prob%
{\parityIndSet.}%
{An undirected graph $G$.}%
{The parity of the number of independent sets in~$G$.}

\begin{thm} \label{thmHardness}
 Given an involution-free tree $H$ with more than one vertex, \parityHcol{} is $\parityP$-complete. In fact, there is a polynomial-time reduction from \parityIndSet{} to \parityHcol.
\end{thm}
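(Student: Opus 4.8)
The plan is to reduce \parityIndSet{} to \parityHcol{} by encoding independent sets of an arbitrary instance graph~$G$ as $H$-colourings of a gadget-augmented graph~$G'$. Recall that a homomorphism to the graph $H_1$ (one looped vertex, one unlooped vertex, joined by an edge) is exactly an independent set. So the strategy is to find inside the involution-free tree~$H$ a substructure that behaves, modulo~2, like $H_1$: two special colours $a$ (``in the set'') and $b$ (``out of the set'') such that, after suitable pinning, counting $H$-colourings that use only $a,b$ on the relevant vertices computes the parity of independent sets of~$G$. The pinning machinery of Theorem~\ref{thmPinning} and Corollary~\ref{corPinning} lets me force chosen vertices of~$G'$ into chosen orbits of $\Aut(H)$; since $H$ is involution-free, Lemma~\ref{lemInvFreeMeansAsymmetric} tells me every orbit is a singleton, so I can pin individual \emph{colours}, which is exactly what an encoding of $H_1$ demands.

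\textbf{The construction.} First I would use the structural results on involution-free trees. By Lemma~\ref{lemVertsDeg2} there is a vertex $e_0$ of degree~2 adjacent to a leaf; walking inward along the path from~$e_0$ I reach a first vertex $e_\ell$ of even degree, giving the path $e_0,o_1,\dots,o_{\ell-1},e_\ell$ of Lemma~\ref{lemPathsOfLengthk}. The point of that lemma is a parity count: among the neighbours of the first interior vertex (either $e_1$ or $o_1$), an \emph{even} number have an odd number of length-$\ell$ walks to~$e_\ell$. This even-ness is precisely the arithmetic fact I need to make a vertex-gadget count independent sets modulo~2. Concretely, I would build a rooted ``edge gadget'' whose root, when pinned appropriately and joined to an edge of~$G$, contributes a factor that, modulo~2, vanishes exactly on the forbidden configuration (both endpoints ``in''), thereby enforcing the independent-set constraint; a vertex gadget based on the length-$\ell$ walk count supplies the two-colour choice at each vertex of~$G$. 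Each gadget is attached at the appropriate vertex of~$G$ and its root pinned to the correct colour via Theorem~\ref{thmPinning}, while edge constraints between two pinned vertices use Corollary~\ref{corPinning}.

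\textbf{Putting it together,} I would show that the number of $H$-colourings of the augmented graph $G'$, restricted by the pinning, is congruent modulo~2 to (an odd multiple of) the number of independent sets of~$G$: each independent set corresponds to colourings in an odd number of ways, and each non-independent-set assignment (a pair of adjacent ``in'' vertices) is killed by the even parity from Lemma~\ref{lemPathsOfLengthk}. Because the pinning reductions run in polynomial time with an oracle for \parityHcol{}, and \parityIndSet{} is $\parityP$-complete by Valiant's result, this establishes $\parityP$-hardness; membership in $\parityP$ is immediate since $H$-colourings are counted by a polynomial-time nondeterministic machine.

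\textbf{The main obstacle} I expect is verifying that the walk-counting parities actually assemble into a \emph{faithful} modulo-2 encoding of $H_1$-colourings, rather than merely a rough correspondence. The delicate part is controlling the contributions of colours other than the two chosen ones $a$ and $b$: I must argue that the ``unwanted'' colourings either cancel in pairs or are suppressed by the gadget, so that the total parity tracks exactly the independent-set count. This is where Lemma~\ref{lemPathsOfLengthk} does the real work, and the bookkeeping of which colour plays the role of the looped vertex of~$H_1$ (the ``out'' colour, adjacent to itself via the walk structure) versus the unlooped vertex must be checked carefully against the tree's geometry around~$e_0,\dots,e_\ell$.
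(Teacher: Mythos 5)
Your overall strategy matches the paper's: locate $e_0$ and the even-degree vertex via Lemmas~\ref{lemVertsDeg2} and~\ref{lemPathsOfLengthk}, use the parity of walk counts to kill configurations where two adjacent vertices are both ``in'', and use pinning plus Lemma~\ref{lemInvFreeMeansAsymmetric} to force specific colours. But there is a genuine gap in the mechanism by which you enforce the constraints. You propose attaching a gadget at (essentially) every vertex and every edge of $G$ and pinning each gadget's root individually ``via Theorem~\ref{thmPinning}''. The pinning machinery cannot do this: Theorem~\ref{thmPinning} pins \emph{one} vertex of the instance graph, and Corollary~\ref{corPinning} pins \emph{two}. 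Pinning is not a physical operation here --- a pinned colour is realised as a formal sum $\Gamma_1+\cdots+\Gamma_r$ of rooted gadgets, and the restricted count is recovered as a sum of unrestricted oracle calls, one per choice of gadget. If you pin $m$ vertices independently, the expansion has $\prod_i r_i$ terms, i.e., up to $r^m$ oracle calls, which is exponential in $|V(G)|+|E(G)|$. (A single gadget whose vector is exactly the characteristic vector of one colour would avoid this, but the paper notes explicitly after Theorem~\ref{thmPinning} that Lemma~\ref{lemHxHy} cannot supply such a gadget.)

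The missing idea is the paper's ``apex'' construction, which reduces the number of pins to two. Form the 2-stretch $\sigma_2(G)$, add a single new vertex $R$ adjacent to \emph{every} $G$-vertex and a single new vertex $B$ joined by a path of length $k$ to \emph{every} stretch vertex, and pin only $R$ (to $e_0$) and $B$ (to $e_k$) using Corollary~\ref{corPinning}. Pinning $R$ simultaneously forces all $G$-vertices into the two neighbours $l$ and $v_1$ of $e_0$; pinning $B$ weights each stretch vertex's colour by its number of length-$k$ walks to $e_k$, so colours with an even walk count cancel modulo 2, and Lemma~\ref{lemPathsOfLengthk} then makes an adjacent ``in--in'' pair contribute an even number of extensions, while every independent-set colouring extends in exactly one way. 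Two pins suffice for the whole instance, so the reduction makes only polynomially many oracle calls. Your final parity bookkeeping is the right idea, but without the apex trick (or some other device enforcing all local constraints with $O(1)$ pins) the reduction as you describe it is not polynomial-time.
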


\begin{defn}
 Given a graph $G$, we call $\sigma_2(G)$ the graph obtained by replacing every edge in $G$ with a path of length 2. We refer to the newly introduced vertices as {\em stretch vertices}, and the original vertices of $G$ as {\em $G$-vertices}. The construction is illustrated in Figure \ref{fig2stretch}.
\end{defn}

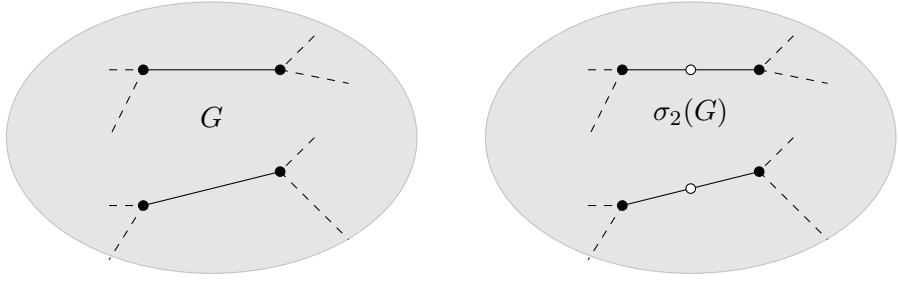
\begin{figure}
\begin{center}
\begin{tikzpicture}[scale = 0.9]

\draw [fill = gray, opacity=0.2] (3,3) ellipse (3 and 2);
\path (3,3) node[above] (Glab) {$G$};

\path (2,4) coordinate (a);
\draw [fill] (a) circle (2pt);

\path (1.5,4) coordinate (adash);
\path (1.5,3) coordinate (aprime);
\draw [dashed] (adash)--(a)--(aprime);

\path (4,4) coordinate (b);
\draw [fill] (b) circle (2pt);
\path (4.5,4.5) coordinate (bdash);
\path (5,3.8) coordinate (bprime);

\draw [dashed] (bdash)--(b)--(bprime);
\draw (a)--(b);

\path (2,2) coordinate (a1);
\draw [fill] (a1) circle (2pt);

\path (1.5,2) coordinate (adash1);
\path (1.5,1.2) coordinate (aprime1);
\draw [dashed] (adash1)--(a1)--(aprime1);

\path (4,2.5) coordinate (b1);
\draw [fill] (b1) circle (2pt);
\path (4.5,3) coordinate (bdash1);
\path (5,1.5) coordinate (bprime1);

\draw [dashed] (bdash1)--(b1)--(bprime1);
\draw (a1)--(b1);

\draw [fill = gray, opacity=0.2] (10,3) ellipse (3 and 2);
\path (10,3) node[above] (Glab') {$\sigma_2(G)$};

\path (9,4) coordinate (a');
\draw [fill] (a') circle (2pt);

\path (8.5,4) coordinate (adash');
\path (8.5,3) coordinate (aprime');
\draw [dashed] (adash')--(a')--(aprime');

\path (11,4) coordinate (b');
\draw [fill] (b') circle (2pt);
\path (11.5,4.5) coordinate (bdash');
\path (12,3.8) coordinate (bprime');

\draw [dashed] (bdash')--(b')--(bprime');
\draw (a')--(b');

\path (9,2) coordinate (a1');
\draw [fill] (a1') circle (2pt);

\path (8.5,2) coordinate (adash1');
\path (8.5,1.2) coordinate (aprime1');
\draw [dashed] (adash1')--(a1')--(aprime1');

\path (11,2.5) coordinate (b1');
\draw [fill] (b1') circle (2pt);
\path (11.5,3) coordinate (bdash1');
\path (12,1.5) coordinate (bprime1');

\draw [dashed] (bdash1')--(b1')--(bprime1');
\draw (a1')--(b1');

\path (10,4) coordinate (ab);
\draw [fill=white] (ab) circle (2pt);

\path (10,2.25) coordinate (a1b1');
\draw [fill=white] (a1b1') circle (2pt);
\end{tikzpicture}

\end{center}
\caption{The 2-stretch of $G$}
 \label{fig2stretch}
\end{figure}

The graph defined above, $\sigma_2(G)$, is usually referred to as the {\it 2-stretch\/} of $G$, and it is an established result that counting $H$-colourings of $\sigma_2(G)$ is equivalent to counting $H^2$-colourings of $G$, where $H^2$ is the multigraph whose adjacency matrix is the square of the adjacency matrix of $H$ (see, \eg,\cite{DyerGreenhill2000}).
We will use a variant of this stretch operation in which we count only those colourings of $\sigma_2(G)$ in which both the stretch vertices and the $G$-vertices are coloured with specific subsets of the colours in~$H$. This is achieved using gadgetry based on the principles established in Section~\ref{secPinning}.

We now detail the reduction from \parityIndSet.  First, given any graph~$G$, we will construct a certain graph $G^*$. We then claim that the number of $H$-colourings of $G^*$, with certain vertices restricted to receive certain colours from~$H$, is congruent \textit{modulo} 2 to the number of independent sets in~$G$. 

For a given involution-free tree $H$, pick a vertex of degree 2, $e_0$, adjacent to a leaf, and a vertex of even degree, $e_k$ such that the unique path of length $k$ in $H$ from $e_0$ to $e_k$ does not contain any vertex of even degree (exactly as in the statement of Lemma \ref{lemPathsOfLengthk}). Note that, as $H$ is involution-free, there are two vertices of even degree, and at least one vertex of degree two which is adjacent to a leaf in $H$ by Lemma \ref{lemVertsDeg2}, and we can choose $e_0$ and $e_k$ with the above properties. 

Now, given a graph $G$, first create $\sigma_2(G)$, then add two new vertices $R$ and $B$. Add an edge between each of the original vertices of $G$ ($G$-vertices) and $R$, and a path of length $k$ from every one of the new vertices (stretch vertices) of $\sigma_2(G)$ to $B$. We call this new graph $G^*$, and the construction is illustrated in Figure~\ref{figGStar}.

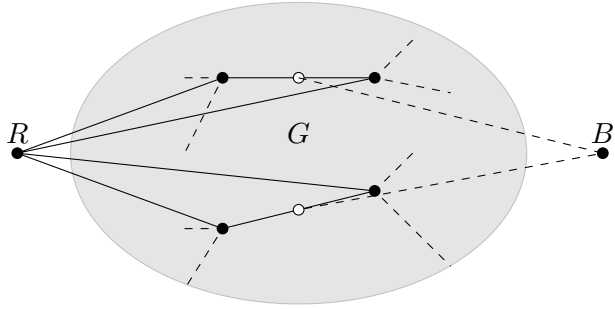
\begin{figure} 
 \begin{center}
  \begin{tikzpicture}

\path (0.3,3) coordinate (R);
\draw [fill] (R) circle (2pt);
\path (0.3,3) node[above] (Rlab) {$R$};

\path (8,3) coordinate (B);
\draw [fill] (B) circle (2pt);
\path (8,3) node[above] (Rlab) {$B$};

\draw [fill = gray, opacity=0.2] (4,3) ellipse (3 and 2);
\path (4,3) node[above] (Glab) {$G$};

\path (3,4) coordinate (a);
\draw [fill] (a) circle (2pt);

\path (2.5,4) coordinate (adash);
\path (2.5,3) coordinate (aprime);
\draw [dashed] (adash)--(a)--(aprime);

\path (5,4) coordinate (b);
\draw [fill] (b) circle (2pt);
\path (5.5,4.5) coordinate (bdash);
\path (6,3.8) coordinate (bprime);

\draw [dashed] (bdash)--(b)--(bprime);
\draw (a)--(b);

\path (3,2) coordinate (a1);
\draw [fill] (a1) circle (2pt);

\path (2.5,2) coordinate (adash1);
\path (2.5,1.2) coordinate (aprime1);
\draw [dashed] (adash1)--(a1)--(aprime1);

\path (5,2.5) coordinate (b1);
\draw [fill] (b1) circle (2pt);
\path (5.5,3) coordinate (bdash1);
\path (6,1.5) coordinate (bprime1);

\draw [dashed] (bdash1)--(b1)--(bprime1);
\draw (a1)--(b1);

\path (4,4) coordinate (ab);
\draw [fill=white] (ab) circle (2pt);

\path (4,2.25) coordinate (a1b1);
\draw [fill=white] (a1b1) circle (2pt);

\draw (a)--(R)--(b);
\draw (b1)--(R)--(a1);

\draw [dashed] (ab)--(B)--(a1b1);

  \end{tikzpicture}

 \end{center}
\caption{The construction of $G^*$}
\label{figGStar}
\end{figure}

Now, using the technology described in Corollary~\ref{corPinning}, and the fact that the orbit of a vertex in an involution-free tree is trivial by Lemma \ref{lemInvFreeMeansAsymmetric}, we can determine the parity of the number of $H$-colourings of $G^*$, in which $R$ is restricted to be coloured with~$e_0$ and $B$ is restricted to be coloured with~$e_k$, using only a \parityHcol{} oracle. We claim that this number is congruent (\textit{modulo} 2) to the number of independent sets in $G$. We will use what we know about the number of walks of length~$k$ between the colours $e_0$ and $e_k$ from Lemma \ref{lemPathsOfLengthk}.

\begin{lem} \label{lemReductionToIS}
Suppose $H$ is an involution-free tree, and let $e_0$ be a vertex of degree 2 adjacent to a leaf, and $e_k$ a vertex of even degree at distance $k \geq 1$ from~$e_0$ such that there are no vertices of even degree on the path of length~$k$ joining them.  Suppose $G$ is a graph and let $G^*$ be constructed from~$G$ as described above.
 
Then the number of $H$-colourings of $G^*$ in which $R$ receives~$e_0$ and $B$ receives~$e_k$ is congruent \textit{modulo}~2 to the number of independent sets in~$G$.
\end{lem}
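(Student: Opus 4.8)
The plan is to count directly the $H$-colourings of $G^*$ in which $R$ is coloured $e_0$ and $B$ is coloured $e_k$, and to show that this count factorises edge-by-edge in exactly the way needed to encode independent sets of $G$. First I would record the two relevant facts about $e_0$: since $e_0$ has degree~$2$ and is adjacent to a leaf, its neighbours in $H$ are precisely the leaf (call it $w$) and the first vertex of the path towards $e_k$ (which is $o_1$ when $k\ge 2$ and is $e_k$ itself when $k=1$; write $b$ for this vertex in either case). Because every $G$-vertex is adjacent to $R$, and $R$ is coloured~$e_0$, each $G$-vertex must receive either $w$ or~$b$. This sets up a bijection between colourings of the $G$-vertices and subsets $S\subseteq V(G)$, under which I declare a $G$-vertex to lie in~$S$ exactly when it is coloured~$b$.

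The key structural observation is that, once the colours of the $G$-vertices are fixed, the remaining vertices of $G^*$ split into pieces that may be coloured independently, one piece per edge of~$G$: each stretch vertex is adjacent only to the two endpoints of the edge it replaces, and carries its own private path of length~$k$ to~$B$, these paths being vertex-disjoint away from the pinned vertex~$B$. Hence the pinned colouring count factorises as
\[
\sum_{S\subseteq V(G)}\ \prod_{\{u,v\}\in E(G)} c(u,v),
\]
where $c(u,v)$ is the contribution of the edge-gadget for $\{u,v\}$. Given the colours of $u$ and~$v$, colouring the stretch vertex~$s$ requires $s$ to be a common neighbour in $H$ of those two colours, and colouring the pendant path from $s$ to~$B$ contributes the number of walks of length~$k$ from the colour of~$s$ to~$e_k$; thus $c(u,v)=\sum_{s} W_k(s,e_k)$, the sum ranging over common neighbours~$s$, where $W_k(\cdot,e_k)$ denotes the number of length-$k$ walks to~$e_k$ in~$H$.

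Next I would evaluate $c(u,v)$ modulo~$2$ in the three cases. If at least one of $u,v$ is coloured~$w$, then because $w$ is a leaf its unique neighbour is~$e_0$, which is also a neighbour of~$b$, so the only admissible colour for~$s$ is~$e_0$ and $c(u,v)=W_k(e_0,e_k)$; since $H$ is a tree and $e_0,e_k$ lie at distance exactly~$k$, the only walk of length~$k$ between them is the geodesic, so $W_k(e_0,e_k)=1$ is odd. If instead both $u$ and~$v$ are coloured~$b$ (both in~$S$), then $s$ ranges over the neighbours of~$b$ and $c(u,v)=\sum_{s\sim b} W_k(s,e_k)$; modulo~$2$ this counts the neighbours of~$b$ from which there are an odd number of length-$k$ walks to~$e_k$, which is \emph{even} by Lemma~\ref{lemPathsOfLengthk}. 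So $c(u,v)$ is odd unless both endpoints lie in~$S$, in which case it is even; the product over edges is therefore $\equiv 1 \pmod 2$ precisely when $S$ is independent and $\equiv 0$ otherwise, and summing over~$S$ gives the number of independent sets of~$G$ modulo~$2$.

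I expect the main obstacle to be the ``both-in'' case, where everything hinges on Lemma~\ref{lemPathsOfLengthk}: one must check that $b$ is exactly the vertex whose neighbours that lemma controls, and that the degenerate small cases are covered (in particular $k=1$, where $b=e_k$ has even degree and $c(u,v)$ collapses to the degree of~$e_k$, and $k=2$). The remaining ingredients — uniqueness of the length-$k$ geodesic walk in a tree, and the genuine factorisation of the gadget count over edges because the stretch vertices and their pendant paths are vertex-disjoint away from the pinned vertices — are routine but should be stated explicitly to justify the product formula.
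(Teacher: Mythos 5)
Your proposal is correct and takes essentially the same route as the paper's own proof: pinning $R$ to $e_0$ forces each $G$-vertex to receive the leaf colour or the path-vertex colour $b$, the pendant paths contribute length-$k$ walk counts to $e_k$, the case of an edge with both endpoints coloured $b$ is annihilated modulo~2 by Lemma~\ref{lemPathsOfLengthk}, and every other edge contributes exactly one extension by uniqueness of the geodesic walk in a tree. The only difference is presentational: you make the edge-by-edge sum--product factorisation explicit, where the paper phrases the identical cancellation argument narratively.
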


\begin{proof}
First consider the $G$-vertices in $G$. They are all neighbours of a vertex which is coloured with $e_0$, so they must therefore receive colours that are adjacent to~$e_0$ in~$H$. But $e_0$ was chosen to be one of the vertices of degree 2 adjacent to a leaf in $H$, so $G$-vertices can only be coloured with either the leaf adjacent to $e_0$ (which we will call $l$) or with the first vertex on the path linking $e_0$ and $e_k$, which we will call $v_1$ in the remainder of this proof. This vertex is $o_1$, except in the case $k=1$ where it is $e_1$.

Now, consider the stretch vertices. These are connected to a vertex which is coloured $e_k$ by a path of length $k$. So, consider the colour used at a given stretch vertex, $s$. If there are an even number of walks of length $k$ from $e_k$ to this colour in $H$, then there are an even number of colourings of $G^*$ which use that colour at $s$, as there are an even number of ways of colouring the path joining $s$ and $B$, and the total number of colourings is the product of the number of ways of colouring this path with the number of ways of colouring the rest of the graph.

We therefore need to count colourings of~$G^*$ in which the colours used at the stretch vertices are such that there are an odd number of paths of length $k$ between them and $e_k$ in~$H$.
Note that these colours must also be adjacent to either $v_1$ or $l$ in~$H$ (as the $G$-vertices are all coloured with either $v_1$ or $l$, and every stretch vertex is adjacent to a $G$-vertex), and therefore, in fact, must be adjacent to~$v_1$, as the only neighbour of $l$ is $e_0$, which is also a neighbour of~$v_1$.

Now, we are reduced to considering colourings of $G^*$ in which the following conditions hold. The $G$-vertices are coloured either $l$ or $v_1$, while the stretch vertices are coloured with one of the neighbours of $v_1$ which has an odd number of length $k$ walks from itself to $e_k$. We claim that the parity of the number of such colourings is equal to the parity of the number of ways of colouring $G$ with the two colours $l$ and $v_1$ such that no two vertices coloured with $v_1$ are adjacent.

Consider a colouring of $G$ with the colours $v_1$ and $l$. If there are two vertices of $G$ which are adjacent in $G$ and both coloured with $v_1$ then there are an even number of extensions of this colouring to an $H$-colouring of $G^*$: the stretch vertex between the two $G$-vertices in $G^*$ can be coloured with any one of the neighbours of $v_1$ which are at distance $k$ from $e_k$ in $H$, and there are an even number of such vertices by Lemma \ref{lemPathsOfLengthk}.

On the other hand, if there are no two such vertices, there is exactly one extension of the given colouring of $G$ to an $H$-colouring of $G^*$: every one of the stretch vertices is adjacent to a vertex which is coloured $l$, so the stretch vertices must all be coloured $e_0$, and as there is only one path of length $k$ from $e_0$ to $e_k$ in $H$, this determines the colouring of the vertices on the paths linking the stretch vertices to $B$.

So the number of colourings of $G^*$ with $H$ such that $R$ is coloured $e_0$ and $B$ is coloured $e_k$ is congruent \textit{modulo} 2 to the number of colourings of $G$ in which each vertex is either coloured with $l$ or $v_1$ and adjacent vertices may not both be coloured with $v_1$. But these are exactly the independent sets of $G$: vertices coloured $v_1$ are ``in'' the independent set and vertices coloured $l$ are ``out''.
\end{proof}

\begin{proof} [Proof of Theorem \ref{thmHardness}]
 By Theorem \ref{corPinning} and Lemma \ref{lemInvFreeMeansAsymmetric} we can count $H$-colourings of $G^*$ in which $R$ is coloured $e_0$ and $B$ is coloured $e_k$ in polynomial time if equipped with an $H$-colouring oracle. But we know that the number of such colourings is congruent \textit{modulo} 2 to the number of independent sets in $G$. Since clearly $G^*$ can be constructed from $G$ in polynomial time, this gives us a polynomial-time Turing reduction from \parityIndSet{} to \parityHcol.
\end{proof}

\subsection{A Dichotomy for Trees} \label{subsecTreeDichotomy}
The main result now follows easily.

\begin{proof}[Proof of Theorem~\ref{thmTreeDichotomy}]
By Lemma \ref{lemInvolutions}, the number of $H$-colourings of a graph $G$ is congruent \textit{modulo} 2 to the number of $H'$-colourings, where $H'$ is any graph obtained from $H$ by reducing $H$ by any of its involutions. Also, if $H$ is a tree then any graph $H'$ which can be reached from~$H$ by reduction by involutions is also a tree. It therefore suffices to consider involution-free trees.

If $H$ is an involution-free tree, and $H$ contains more than one vertex, then Theorem~\ref{thmHardness} shows that \parityHcol{} is $\parityP$-complete. On the other hand, if $H$ contains either 0 or 1 vertices then \Hcol{} (and hence \parityHcol) is polynomial-time solvable by Lemma~\ref{lemEasyTrees}.
\end{proof}

Note that although the dichotomy is certainly decidable, it is not clear whether it can be decided in polynomial time.
On the face of it, finding the reduced form associated with a graph~$H$ requires finding an involution of~$H$, 
and no polynomial-time algorithm is known for this problem.

\section{Other graphs} \label{secOtherGraphs}
As noted earlier, we conjecture not only that there is a dichotomy for the complexity of \parityHcol{} for general $H$, but that this dichotomy is the same as that for trees. In other words, that the only way in which a \parityHcol{} problem can be polynomial-time solvable is if $H$ reduces by involutions to one of the four trivial graphs.
We now show that we can restrict our attention to connected $H$. That is, if an involution-free graph $H$ has any connected component $H_1$ for which \parityHoneCol{} is $\parityP$-hard, then the parity colouring problem associated with~$H$ is itself $\parityP$-hard.

\begin{thm} \label{thmConnectedComponentHardness}
Let $H$ be an involution-free graph. If $H_1$ is a connected component of $H$ and \parityHoneCol{} is $\parityP$-hard, then \parityHcol{} is $\parityP$-hard.
\end{thm}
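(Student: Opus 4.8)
The plan is to reduce \parityHoneCol{} to \parityHcol{} by isolating the component $H_1$ with the pinning machinery of Theorem~\ref{thmPinning}, exploiting the fact that a connected graph must map entirely into a single connected component of~$H$. The starting structural observation is that, since $H$ is involution-free, its connected components are pairwise non-isomorphic: if two components were isomorphic, the map swapping them (via a fixed isomorphism) while fixing every other vertex would be an involution of~$H$. Consequently every automorphism of~$H$ fixes each component setwise, because an automorphism can only carry a component to an isomorphic one. In particular no orbit of $\Aut(H)$ straddles two components, so $V(H_1)$ is a disjoint union $O_1 \cup \dots \cup O_t$ of orbits of $\Aut(H)$.

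The key observation for connected instances is that for a connected graph $G'$ every homomorphism $\phi\colon G' \to H$ has connected image and therefore lands entirely inside one component; hence $\phi$ is an $H_1$-colouring precisely when it assigns some fixed vertex $v$ of $G'$ a colour in $V(H_1)$. Fixing such a $v$, this gives
$$\hom(G', H_1) = \sum_{s=1}^{t} \big(\text{number of $H$-colourings of $(G',v)$ with $v$ coloured from $O_s$}\big),$$
and each summand is computable \textit{modulo}~$2$ in polynomial time from a \parityHcol{} oracle by Theorem~\ref{thmPinning} (which applies since $H$ is involution-free). Summing \textit{modulo}~$2$ yields $[\hom(G',H_1)]_2$.

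To handle an arbitrary, possibly disconnected, instance $G$ with connected components $G_1,\dots,G_r$, I would use multiplicativity, $\hom(G,H_1) = \prod_{j} \hom(G_j,H_1)$, so that $[\hom(G,H_1)]_2 = \prod_j [\hom(G_j,H_1)]_2$. Computing each factor by the connected case above and multiplying the results gives $[\hom(G,H_1)]_2$ using polynomially many oracle calls. This constitutes a polynomial-time Turing reduction from \parityHoneCol{} to \parityHcol, so the assumed $\parityP$-hardness of \parityHoneCol{} transfers to \parityHcol, as required.

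The only genuinely delicate step is the first one: verifying that pinning a single vertex to an orbit really isolates the $H_1$-colourings. This rests on (i) the components being pairwise non-isomorphic, so that orbits of $\Aut(H)$ lie within single components and $V(H_1)$ is a union of orbits, and (ii) the connectivity of $G'$ forcing a global commitment to one component from a single pinned vertex. Everything else — the sum over orbits, the invocation of Theorem~\ref{thmPinning}, and the reduction to connected instances via multiplicativity of parity — is routine bookkeeping.
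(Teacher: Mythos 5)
Your proof is correct, and its overall architecture matches the paper's: reduce to connected instances via multiplicativity of $\hom$ over components, fix a vertex $v$ of the instance, and use Theorem~\ref{thmPinning} plus connectivity (a homomorphic image of a connected graph is connected) to isolate exactly the colourings landing in $H_1$. Where you genuinely diverge is in how you cope with the fact that Theorem~\ref{thmPinning} pins to \emph{orbits} of $\Aut(H)$ rather than to individual colours. You prove a structural fact: an involution-free graph has pairwise non-isomorphic components (swapping two isomorphic components is an involution), so every automorphism fixes each component setwise, no orbit straddles components, and $V(H_1)$ is a disjoint union of orbits; you then sum the orbit-pinned counts directly. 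The paper instead sums over individual colours $h_i \in V(H_1)$ and argues that each such vertex-pinned count has the same parity as the corresponding orbit-pinned count: all colours in an orbit admit equally many pinned colourings (compose with an automorphism), and every orbit has odd size, since orbit sizes divide $|\Aut(H)|$, which is odd by Lemma~\ref{lemInvolutionIffEvenOrder}. The trade-off is mild but real: your route needs the component structure of $H$ but no counting argument, and it makes the decomposition of $V(H_1)$ into orbits explicit; the paper's route needs the odd-orbit-size argument but never needs to know where the orbits lie (in particular, it would survive even if orbits could cross component boundaries). Both hinge, in different ways, on involution-freeness of $H$.
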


\begin{proof}
 Take any graph $G$, and assume that $G$ is connected (since the number of $H$-colourings of $G$ is just the product of the number of $H$-colourings of each of its connected components). We can use an oracle for $H$-colouring to determine the parity of the number of colourings of~$G$ in which only colours from~$H_1$ are used in the following way: let $v \in V(G)$ be any vertex of $G$. For each colour $h_i \in V(H_1)$, we can count the colourings of $G$ in which $v$ is coloured~$h_i$ using Theorem \ref{thmPinning}. Notice that the size of the orbit of~$h_i$ in $\Aut(H)$ is odd, as $H$ has no involutions, so the parity of the number of colourings of $G$ with $h_i$ at $v$ is the same as the parity of the number of colourings of $G$ which use any of the vertices in the orbit of~$h_i$ at~$v$. 

But we can do this for every vertex in $H_1$, and since $G$ is connected, any colouring which uses a vertex from $H_1$ at $v$ can use only colours from $H_1$ anywhere in $G$. Conversely, any colouring of $G$ which uses only colours from $H_1$ must use some colour from $H_1$ at $v$, so this does indeed allow us to count all such colourings of $G$.
\end{proof}

Note that this actually allows us to strengthen Theorem \ref{thmTreeDichotomy}: the $H$-colouring problem associated with any \textit{forest} $H$ is polynomial-time solvable if the reduced form associated with the forest in the reduction system described in Section \ref{secReductionByInvolutions} is the null graph or the graph on one vertex, and $\parityP$-complete otherwise.

\bibliographystyle{plain}
\bibliography{CSPbib}

\end{document}